\newcommand{\reals}{{{\mathbb{R}}}}
\newcommand{\OPT}{{{{\mathrm{OPT}}}}}
\newcommand{\opt}{{{{\mathrm{opt}}}}}
\newcommand{\probability}{{{\mathrm{P}}}}
\newcommand{\Ev}{{\mathit{Ev}}}
\newcommand{\partialsol}{{\mathit{partial}}}
\newcommand{\best}{{\mathit{best}}}
\newcommand{\sol}{{\mathit{sol}}}
\newcommand{\hit}{{\mathit{hit}}}
\newcommand{\poly}{{\mathrm{poly}}}
\newcommand{\argmax}{{{{\mathrm{argmax}}}}}
\newcommand{\quality}{{{\mathit{quality}}}}
\newcommand{\Cov}{{{\mathit{Cov}}}}
\newcommand{\uCov}{{{\mathit{uCov}}}}
\title{Approximating the MaxCover Problem with Bounded Frequencies in
  FPT Time}
\author{Piotr Skowron\\ 
        University of Warsaw\\
        Warsaw, Poland\\
        \and
        Piotr Faliszewski\\
        AGH University\\
        Krakow, Poland
}
\newtheorem{theorem}{Theorem}
\newtheorem{corollary}[theorem]{Corollary}
\newtheorem{definition}{Definition}
\newtheorem{proposition}[theorem]{Proposition}
\newenvironment{proof}{\paragraph{Proof}}{\hfill$\Box$\medskip}
\newcommand{\np}{{\mathrm{NP}}}
\newcommand{\fpt}{{\mathrm{FPT}}}
\newcommand{\wone}{{\mathrm{W[1]}}}
\newcommand{\wtwo}{{\mathrm{W[2]}}}
\newcommand{\wpclass}{{\mathrm{W[P]}}}
\newcommand{\p}{{\mathrm{P}}}
\newcommand{\calI}{{{\mathcal{I}}}}
\newcommand{\calC}{{{\mathcal{C}}}}
\newcommand{\calA}{{{\mathcal{A}}}}
\newcommand{\calP}{{{\mathcal{P}}}}
\newcommand{\calS}{{{\mathcal{S}}}}
\begin{document}

\maketitle

\begin{abstract}
  We study approximation algorithms for several variants of the
  MaxCover problem, with the focus on algorithms that run in FPT time.
  In the MaxCover problem we are given a set $N$ of elements, a family
  $\calS$ of subsets of $N$, and an integer $K$. The goal is to find
  up to $K$ sets from $\calS$ that jointly cover (i.e., include) as
  many elements as possible. This problem is well-known to be
  $\np$-hard and, under standard complexity-theoretic assumptions, the
  best possible polynomial-time approximation algorithm has
  approximation ratio $(1 - \frac{1}{e})$.  We first consider a variant
  of MaxCover with bounded element frequencies, i.e., a variant where
  there is a constant $p$ such that each element belongs to at most
  $p$ sets in $\calS$.  For this case we show that there is an FPT
  approximation scheme (i.e., for each $\beta$ there is a $\beta$-approximation
  algorithm running in FPT time) for the problem of maximizing the number of
  covered elements, and a randomized FPT approximation scheme for the
  problem of minimizing the number of elements left uncovered (we take
  $K$ to be the parameter). Then, for the case where there is a
  constant $p$ such that each element belongs to at least $p$ sets
  from $\calS$, we show that the standard greedy approximation
  algorithm achieves approximation ratio exactly
  ${1-e^{-\max(pK/\|\calS\|, 1)}}$. We conclude by considering an unrestricted
  variant of MaxCover, and show approximation algorithms that run in
  exponential time and combine an exact algorithm with a greedy
  approximation. Some of our results improve currently known results for
  MaxVertexCover.
\end{abstract}

\section{Introduction}

We study approximation algorithms for, and parametrized complexity of,
the MaxCover problem with bounded frequency of the elements. In the
MaxCover problem we are given a set $N$ of $n$ elements, a family
$\calS = \{S_1, \ldots, S_m\}$ of $m$ subsets of $N$, and an integer
$K$. The goal is to find a size-at-most-$K$ subcollection of $\calS$
that covers as many elements from $N$ as possible. In the variant with
bounded frequencies of elements we further assume that there is some
constant $p$ such that each element appears in at most $p$ sets. A
particularly well-known special case of MaxCover with frequencies
upper-bounded by $2$ is the MaxVertexCover problem: We are given a
graph $G = (V,E)$ and the goal is to find $K$ vertices that, jointly,
are incident to as many edges as possible (i.e., the edges are the
elements to be covered and the vertices are the sets; clearly, each
edge ``belongs to'' exactly two vertices). Nonetheless, even for the
frequency upper bound $2$, MaxCover is considerably more general than
MaxVertexCover (e.g., the former allows two sets to have more than one
element in common, which is impossible in the latter\footnote{This
  difference may not sound particularly significant, but due to it
  some algorithms for MaxVertexCover (e.g., an FPT approximation
  scheme of Marx~\cite{Marx06parameterizedcomplexity}) do not
  generalize easily to the MaxCover setting.}).
%
%
%
In addition to MaxCover with upper-bounded frequencies, we also study
a variant of the problem with lower-bounded frequencies, and the
general variant, without any restrictions on element frequencies.

Our paper differs from the typical approach to the design of
approximation algorithms in that we do not focus on polynomial-time
algorithms, but also consider exponential-time ones.  For example, we
are interested in FPT approximation schemes, that is, in approximation
algorithms that for each desired approximation ratio $\beta$ output a
$\beta$-approximate solution in exponential time, but where the
exponential growth is only with respect to the number $K$ of sets that
we allow in the solution (and where $\beta$ is considered to be a
constant when computing the running time). In that respect, our work
is very close in spirit to the recent study of Croce and
Paschos~\cite{cro-pas:j:cover}, who---among other results---give
moderately exponential time (but not FPT-time) approximation schemes
for the MaxVertexCover problem. (However, there is also an FPT-time
approximation scheme for MaxVertexCover due to
Marx~\cite{Marx06parameterizedcomplexity}.)  Such exponential-time
approximation algorithms are desirable because they can achieve much
better approximation ratios than the polynomial-time ones, while still
being significantly faster than the currently-known exact
algorithms. We give more detailed review of related work in
Section~\ref{sec:related} and below we briefly describe our findings
and the motivation behind our research.

We obtain the following results (unless we mention otherwise, we
always consider our problems to be parametrized by $K$, the number of
the sets allowed in the solution).  First, building on the approach of
Guo et al.~\cite{guo-nie-wer:j:vertex-cover-variants}, in
Section~\ref{sec:worst-case} we show that the MaxCover problem with
bounded frequencies is $\wone$-complete.
On the other hand, without the frequency upper-bound assumption,
MaxCover is $\wtwo$-hard and we show that it belongs to $\wpclass$. We
also consider several other parameters and, in particular, we show
that MaxCover is $\wtwo$-complete for the parameter that combines the
number of sets we can use in the solution and the number of elements
that we are allowed to leave uncovered.  The core of the paper is,
however, in Section~\ref{sec:approximation}.  There, we show that for
each $\beta$, $0 < \beta < 1$, there is an FPT $\beta$-approximation
algorithm for the MaxCover problem with bounded frequencies.  On the
other hand, for the case where each element appears in \emph{at least}
$p$ out of $m$ sets, we show that the standard MaxCover greedy
approximation algorithm (i.e., one that picks one-by-one those sets
that include most not-yet-covered elements) achieves approximation
ratio $1-e^{-\frac{pK}{m}}$ (for the general case, this algorithm's
approximation ratio is $1-\frac{1}{e}$). Finally, we consider a
variant of the MaxCover problem where instead of maximizing the number
of covered elements, we minimize the number of those that remain
uncovered. We refer to this problem as the MinNonCovered problem.
Under the assumption of upper-bounded frequencies, we show a
randomized approximation algorithm that for each given $\beta$, $\beta
> 1$, and each given probability $1-\epsilon$, outputs in FPT time a
$\beta$-approximate solution with probability at least $1-\epsilon$
(the FPT time is with respect to $K$, $\beta$, and $\epsilon$).
Finally, in Section~\ref{sec:unrestricted} we consider two
exponential-time approximation algorithms for the unrestricted
MaxCover problem. Both of these algorithms solve a part of the problem
in a greedy way and a part using some exact algorithm, but they differ
in the order in which they apply each of these strategies. We show a
smooth transition between the running times of these algorithms and their
approximation ratios.

\subsection{Motivation}

We believe that the MaxCover problem with bounded frequencies is
an interesting and important problem on its own. However, the
particular reason why we study it is due to its connection to
winner-determination under Chamberlin--Courant's voting rule. Under
the Chamberlin--Courant's rule, a society of $n$ voters chooses from a
group of $m$ candidates a committee of $K$ representatives. The rule
was originally proposed as a mean of electing
parliaments~\cite{ccElection}, but recently Boutilier and
Lu~\cite{budgetSocialChoice} pointed out that it might be very useful
in the context of recommendation systems and Skowron et
al.~\cite{sko-fal-sli:c:multiwinner} showed its connection to 
resource allocation problems.

There are many variants of the rule, depending on the so-called
misrepresentation function that it uses. Here, we will focus on the
approval variant, though we mention that perhaps the best-studied one
(though not necessarily the most practical one) is the variant that
uses the Borda misrepresentation function (we omit the details of
Borda misrepresentation here and point the reader to the original
paper defining the rule~\cite{ccElection}).

In the approval-based variant of Chamberlin--Courant's rule, the
voters submit ballots on which they list all the candidates that they
find acceptable as their representatives (that is, the candidates that
they approve of). For each size-$K$ subset $S$ of candidates (referred
to as a \emph{committee}), the misrepresentation score of $S$ is the
total number of voters who do not approve of any of the candidates in
$S$. Chamberlin--Courant's rule elects a committee $S$ that minimizes
the misrepresentation.  Naturally, there may be several committees
that minimize the misrepresentation and in practice one has to apply
some tie-breaking. In the computational studies of voting researchers
are usually interested in finding any such a committee and so do we.

The above description makes clear the connection between
approval-based Chamberlin--Courant's rule and the MaxCover problem:
The voters are the elements that need to be covered, the candidates
are the sets (a voter $v$ belongs to the set defined by some candidate
$c$ if $v$ approves of $c$), and the size of the committee is the
number of sets one can pick. The achieved misrepresentation is the
number of uncovered elements.

Given this connection, clearly winner determination under
Chamberlin--Courant's rule is an $\np$-hard problem for the approval
misrepresentation~\cite{complexityProportionalRepr} (it also is for
Borda misrepresentation~\cite{budgetSocialChoice}). Further, in both
cases the problem is $\wtwo$-hard, as shown by Betzler et
al.~\cite{fullyProportionalRepr}.  Thus if one wants to find the exact
winning committee, one is restricted to exponential time algorithms,
such as, e.g., solving a particular integer linear
program~\cite{potthoff-brams} or trying all possible committees.  On
the other hand, at least for the Borda misrepresentation function, the
problem is quite easy to approximate, both theoretically (there is a
polynomial-time approximation scheme due to Skowron et
al.~\cite{sko-fal-sli:c:multiwinner}) and in practice (as shown by
experiments~\cite{sko-fal-sli:c:monroe-cc-experimental}).
Unfortunately, the connection between the approval variant of the rule
and the MaxCover problem severely limits approximation possibilities:
In terms of polynomial-time algorithms the best we can get is the
standard greedy $(1-\frac{1}{e})$-approximation algorithm.

Yet, in practical elections it is somewhat unreasonable to expect that
each voter would list many candidates as approved.  Indeed, in some
political systems that use approval-like ballots, even the law itself
limits the number of candidates one can list (for example, in Polish
parliamentary elections the voters can list up to three
candidates). Thus it is most natural to consider the approval variant
of Chamberlin--Courant's rule for the case where each voter can
approve of at most a given number $p$ of the candidates. This variant
of the rule directly corresponds to the MaxCover problem with bounded
frequencies. On the other hand, it is also natural to consider
settings were voters are required to approve of at least a given
number of candidates (such requirement can, for example, be imposed by
the election rules). This corresponds to the MaxCover problem were
elements' frequencies are \emph{lower bounded} by some value.

In effect, our results on the MaxCover problem with bounded
frequencies fill in the hole between efficient approximation
algorithms for the Borda variant of Chamberlin--Courant's rule given
by Skowron et al.~\cite{sko-fal-sli:c:multiwinner} and
$\wtwo$-hardness results of Betzler et
al.~\cite{fullyProportionalRepr} for the general, unrestricted
approval variant of the rule.

\section{Preliminaries}\label{sec:prelims}

We assume that the reader is familiar with standard notions regarding
(approximation) algorithms, computational complexity theory and
parametrized complexity theory. Below we provide a very brief review.
For each positive integer $n$, we write $[n]$ to mean $\{1, \ldots,
n\}$. 

Let $\calP$ be an algorithmic problem where, given some instance $I$,
the goal is to find a solution $s$ that maximizes a certain function
$f$.  Given an instance $I$ of $\calP$, we refer to the value $f(s)$
of an optimal solution $s$ as $\OPT(I)$ (or, sometimes, simply as
$\OPT$ if the instance $I$ is clear from the context). Let $\beta$, $0
< \beta \leq 1$, be some fixed constant.  An algorithm $\calA$ that
given instance $I$ returns a solution $s'$ such that $f(s') \geq
\beta\OPT(I)$ is called a $\beta$-approximation algorithm for the
problem $\calP$.
Analogously, we define $\OPT(I)$ and the notion of a
$\gamma$-approximation algorithm, $\gamma > 1$, for the case of a
problem $\calP'$, where the task is to find a solution that minimizes
a given goal function $g$. (Specifically, given an instance $I$ of
$\calP'$, a $\gamma$-approximation algorithm is required to return a
solution $s'$ such that $g(s') < \gamma \OPT(I)$).  Given instance $I$
of some algorithmic problem, we write $|I|$ to denote the length of
the standard, efficient encoding of $I$.

In this paper we focus on the following two problems.
\begin{definition}
  An instance $I = (N,\calS,K)$ of the MaxCover problem consists of a
  set $N$ of $n$ elements, a collection $\calS = \{S_1, \ldots, S_m\}$
  of $m$ subsets of $N$, and nonnegative integer $K$. The goal is to
  find a subcollection $\calC$ of $\calS$ of size at most $K$ that
  maximizes $\|\bigcup_{S \in \calC}S\|$.
\end{definition}

\begin{definition}
  The MinNonCovered problem is defined in the same way as the MaxCover
  problem, except the goal is to find a subcollection $\calC$ such
  that $\|N\| - \|\bigcup_{S \in \calC}S\|$ is minimal.
\end{definition}
In the decision variant of MaxCover (of MinNonCovered) we are
additionally given an integer $T$ (an integer $T'$) and we ask if
there is a collection of up to $K$ sets from $\calS$ that cover at
least $T$ elements (that leave at most $T'$ elements
uncovered). MaxVertexCover is a variant of MaxCover where we are given
a graph $G = (V,E)$, the edges are the elements to be covered, and
vertices define the sets that cover them (a vertex covers all the
incident edges). SetCover and VertexCover are variants of MaxCover and
MaxVertexCover, respectively, where we ask if it is possible to cover
all the elements (all the edges).

In terms of the optimal solutions, MaxCover and MinNonCovered are
equivalent. None\-theless, they do differ when considered from the
point of view of approximation. For example, if there were a solution
that covered all the $n$ elements, then a $\beta$-approximation
algorithm for MaxCover, $0 < \beta < 1$, would be free to return a
solution that covered only $\beta n$ of them, but a
$\gamma$-approximation algorithm for the MinNonCovered problem,
$\gamma > 1$, would have to provide an optimal solution that covered
all the elements.

Given an instance $I$ of MaxCover (MinNonCovered), we say that an
element $e$ has frequency $t$ if it appears in exactly $t$ sets.  We
mostly focus on the variants of MaxCover and MinNonCovered where there
is a given constant $p$ such that each element's frequency is at most
$p$. We refer to these problems as variants with bounded frequencies.

We will focus on (approximation) algorithms that run in FPT time (see
the books of Downey and Fellows~\cite{dow-fel:b:parameterized},
Niedermeier~\cite{nie:b:invitation-fpt}, and Flum and
Grohe~\cite{flu-gro:b:parameterized-complexity} for details on
parametrized complexity theory). To speak of an FPT algorithm for a
given problem, we declare a part of the problem as the so-called
parameter. Here, for MaxCover and MinNonCovered problems, we take the
parameter to be the number $K$ of sets that we are allowed to use in
the solution (in Section~\ref{sec:worst-case} we briefly consider
MaxCover/MinNonCovered with parameters $T$, $T'$, and their
combinations with $K$). Given an instance $I$ of a problem with
parameter $k$, an FPT algorithm is required to run in time
$f(k)\poly(|I|)$, where $f$ is some computable function and $\poly(\cdot)$ is
some polynomial.

From the point of view of parametrized complexity, FPT is seen as the
class of tractable problems. There is also a whole hierarchy of
hardness classes, $\fpt \subseteq \wone \subseteq \wtwo \subseteq
\cdots \wpclass \subseteq \cdots$. The standard definitions of $\wone,
\wtwo, \ldots$ are quite involved and so we point the reader to
appropriate
overviews~\cite{dow-fel:b:parameterized,nie:b:invitation-fpt,flu-gro:b:parameterized-complexity}.
However, we can also define these classes through an appropriate
reduction notion and their complete problems.

\begin{definition}
  Let $\mathcal{P}$ and $\mathcal{P}'$ be two decision problems
  parametrized by real non-negative parameters $\mathcal{K}$ and
  $\mathcal{K}'$, respectively. We say that $\mathcal{P}$ reduces to
  $\mathcal{P}'$ through a parametrized reduction if there exist a
  mapping $F \colon \mathcal{P} \rightarrow \mathcal{P}'$ (computable
  in FPT time with respect to parameter $\mathcal{K}$) and two
  computable functions, $g \colon \reals_{+} \rightarrow \reals_{+}$
  and $h: \reals_{+} \rightarrow \reals_{+}$, such that (i) for each
  instance $(I, K) \in \mathcal{P}$ the answer to $(I, K)$ is ``yes''
  if and only if the answer to $F(I) = (I', K')$ is ``yes'', (ii) $K$
  and $K'$ are the values of the parameters $\mathcal{K}$ and
  $\mathcal{K}'$ respectively, (iii) $|I'| \leq g(K)\poly(|I|)$,
  and (iv) $K' \leq h(K)$.
\end{definition}

$\wone$ is the class of all problems for which there is a parametrized
reduction to the Clique problem (i.e., the problem where we ask if a
given graph $G = (V,E)$ has a clique of size at least $K$, where $K$
is the parameter). $\wtwo$ is the class of problems with parametrized
reductions to SetCover (with parameter $K$).  Interestingly,
VertexCover is well-known to be in FPT, but MaxVertexCover is
$\wone$-complete~\cite{guo-nie-wer:j:vertex-cover-variants}.

One of the standard ways of showing $\wone$-membership is to give a
reduction to the
Short-Non\-de\-ter\-ministic-Turing-Machi\-ne-Com\-putation problem
(shown to be $\wone$-complete for parameter $k$ by
Cesati~\cite{ces:j:turing-way-parameterized-complexity}).

\begin{definition}
  In the Short-Non\-de\-ter\-ministic-Turing-Machi\-ne-Com\-putation
  problem we are given a single-tape nondeterministic Turing machine
  $M$ (described as a tuple including the input alphabet, the work
  alphabet, the set of states, the transiation function, the initial
  state and the accepting/rejecting states), a string $x$ over $M$'s
  input alphabet, and an integer $k$. The question is whether there is
  an accepting computation of $M$ that accepts $x$ within $k$ steps.
\end{definition}

The Bounded-Non\-de\-ter\-ministic-Turing-Machi\-ne-Com\-putation
problem is defined similarly, but in addition we are also given an
integer $m$, and we ask if $M$ accepts its input within $m$ steps, of
which at most $k$ are nondeterministic. Cesati has shown that this
problem is
$\wpclass$-complete~\cite{ces:j:turing-way-parameterized-complexity}
(we omit the exact definition of $\wpclass$; the reader can think of
$\wpclass$ as the set of problems that have parameterized reductions
to the Bounded-Non\-de\-ter\-ministic-Turing-Machi\-ne-Com\-putation
problem).

\section{Related Work}\label{sec:related}

There is extensive literature on the complexity and approximation
algorithms for the SetCover and VertexCover problems.
%
%
On the other hand, the literature on MaxCover and MaxVertexCover is
more scarce
The literature on MaxCover with bounded frequencies of the elements is
scarcer yet. Below we survey some of the known results.

First, it is immediate that MaxCover, MinNonCovered, and
MaxVertexCover are $\np$-complete (this follows immediately from the
$\np$-completeness of SetCover and VertexCover). In terms of
approximation, a greedy algorithm that iteratively picks sets that
cover the largest number of yet uncovered elements achieves the
approximation ratio $1-\frac{1}{e}$, and this is optimal unless $\p =
\np$ (see, e.g., the textbook~\cite{hoc:b:covers} for the analysis of
the greedy algorithm and the work of Feige~\cite{fei:j:cover} for the
approximation lower bound). However, our focus is on the MaxCover
problem with bounded frequencies and this problem is, in spirit,
closer to MaxVertexCover than to the general MaxCover problem. Indeed,
MaxVertexCover can be seen as a special case of MaxCover with
frequencies bounded by $2$. However, we stress that even MaxCover with
frequencies bounded by $2$ is considerably more general than
MaxVertexCover and, compared to MaxVertexCover, may require different
algorithmic insights.

As far as we know, the best polynomial-time approximation algorithm
for MaxVertexCover is due to Ageev and
Sviridenko~\cite{age-svi:b:covers}, and achieves approximation ratio
of $\frac{3}{4}$. However, in various settings, it is possible to
achieve better results; we mention the papers of Han et
al.~\cite{han-ye-zha-zha:j:cover} and of Galluccio and
Nobili~\cite{gal-nob:j:cover} as examples.

From the point of view of parametrized complexity, MaxVertexCover was
first considered by Guo et
al.~\cite{guo-nie-wer:j:vertex-cover-variants}, who have shown that it
is $\wone$-complete. The problem was also studied by
Cai~\cite{cai:j:cardinality-constrained} who gave the currently best
exact algorithm for it and by Marx, who gave an FPT approximation
scheme~\cite{Marx06parameterizedcomplexity}. There is also an FPT
algorithm for MaxCover , for parameter $T$, i.e., the number of
elements to cover, due to Bl{\"a}ser~\cite{bla:j:partial-set-cover}.

In our paper, we attempt to merge parametrized study of MaxCover with
its study from the point of view of approximation algorithms.  In that
respect, our work is very close in spirit that of Croce and
Paschos~\cite{cro-pas:j:cover}, who provide moderately exponential
approximation algorithms for MaxVertexCover, and to the work of
Marx~\cite{Marx06parameterizedcomplexity}.  Compared to their results,
we consider a more general problem, MaxCover (with or without bounded
frequencies) and, as far as it is possible, we seek algorithms that
run in FPT time (the algorithm of Croce and Paschos is not
FPT). Interestingly, even though we focus on a more general problem,
our algorithms improve upon the results of Croce and
Paschos~\cite{cro-pas:j:cover} and of
Marx~\cite{Marx06parameterizedcomplexity}, even when applied to
MaxVertexCover
.

\section{Worst-Case Complexity Results}\label{sec:worst-case}

We start our parametrized study of the MaxCover problem by
considering its worst-case complexity. We first consider MaxCover with
bounded frequencies. It follows directly from the literature that the
problem is $\wone$-hard, and here we show that it is, in fact,
$\wone$-complete (unless the frequency bound $p$ is exactly $1$; then
it is optimal to simply pick the sets with highest cardinalities).

\begin{theorem}
  For each constant $p$ greater than $2$, the MaxCover problem with
  frequencies upper-bounded by $p$ is $\wone$-complete (when
  parametrized by the number of sets in the solution).
\end{theorem}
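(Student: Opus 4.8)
The plan is to establish both membership in $\wone$ and $\wone$-hardness, since the theorem claims $\wone$-completeness.

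The plan is to establish both directions of the completeness claim, and $\wone$-hardness comes almost for free. MaxVertexCover is precisely MaxCover in which every element (edge) has frequency exactly $2 \le p$, and Guo et al.~\cite{guo-nie-wer:j:vertex-cover-variants} proved MaxVertexCover to be $\wone$-hard for the parameter $K$. Hence the identity embedding---edges become elements, vertices become sets, and $K$ together with the coverage threshold $T$ are left unchanged---is a parametrized reduction witnessing $\wone$-hardness of MaxCover with frequencies bounded by $p$ for every $p \ge 2$. The real content of the theorem, and the step I expect to be the main obstacle, is therefore $\wone$-membership, which I would obtain by reducing the decision version to Short-Nondeterministic-Turing-Machine-Computation, $\wone$-complete by Cesati~\cite{ces:j:turing-way-parameterized-complexity}.

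The structural observation that makes membership possible is that a bounded frequency truncates inclusion--exclusion. If a subcollection $\calC = \{S_{i_1}, \ldots, S_{i_K}\}$ is chosen, then no element lies in more than $p$ of these sets, so every $t$-wise intersection with $t > p$ is empty and
\[
  \Big\| \bigcup_{j=1}^{K} S_{i_j} \Big\|
     = \sum_{t=1}^{p} (-1)^{t-1}
        \sum_{1 \le j_1 < \cdots < j_t \le K}
          \| S_{i_{j_1}} \cap \cdots \cap S_{i_{j_t}} \| .
\]
The number of intersection cardinalities occurring here is $\sum_{t=1}^{p}\binom{K}{t} = O(K^{p})$, a function of $K$ alone since $p$ is constant; and all $O(m^{p})$ cardinalities of at-most-$p$-wise intersections of sets from $\calS$ can be precomputed in time polynomial in $|I|$ by the reduction that builds the machine.

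With this in hand I would construct, from $(N,\calS,K)$ and a threshold $T$, a nondeterministic machine $M$ that first guesses the indices $i_1,\ldots,i_K$ (one per step, using a tape alphabet large enough to store a set index in a single symbol) and then deterministically evaluates the truncated formula and compares it with $T$. The delicate point---the step I expect to fight with---is arithmetic: the partial sums can reach $O(Kn)$, which is not bounded by the parameter, so naive addition would cost $\Theta(\log n)$ steps per operation. I would dispose of this by exploiting that the alphabet and transition function of $M$ are part of the (polynomially large) input: a single tape symbol can encode an integer up to $O(Kn)$, and both the lookup $\langle i_{j_1},\ldots,i_{j_t}\rangle \mapsto \| S_{i_{j_1}} \cap \cdots \cap S_{i_{j_t}} \|$ and one addition/subtraction step can be hardwired as table-driven transitions of size polynomial in $|I|$. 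The machine then performs only $O(K^{p})$ operations and accepts within $k = O(K^{p})$ steps; as both $k$ and $|M|$ are bounded by a computable function of $K$ times a polynomial in $|I|$, this is a legal parametrized reduction, and $M$ has an accepting $k$-step computation exactly when some $K$ sets cover at least $T$ elements.

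Combining the two directions yields $\wone$-completeness for every constant $p \ge 2$, which in particular covers the range $p > 2$ of the statement; the case $p = 1$ is excluded because then the sets are pairwise disjoint and greedily taking the $K$ largest sets is optimal.
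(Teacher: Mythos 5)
Your proposal is correct and follows essentially the same route as the paper: $\wone$-hardness inherited from MaxVertexCover, and $\wone$-membership via a reduction to Short-Nondeterministic-Turing-Machine-Computation whose machine guesses the $K$ indices and evaluates the inclusion--exclusion formula truncated at order $p$, with the at-most-$p$-wise intersection cardinalities precomputed and hardwired into the transition function. Your extra care about encoding large partial sums in single tape symbols only makes explicit what the paper leaves implicit.
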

\begin{proof}
  The hardness follows directly from the $\wone$-hardness of the
  MaxVertexCover problem~\cite{guo-nie-wer:j:vertex-cover-variants}.
  We prove membership in $\wone$ by reducing MaxCover with bounded
  frequencies to the Short-Nondeterministic-Turing-Machine-Computation
  problem.

  Let $p$ be some fixed constant and let $I = (N,\calS,K,L)$ be our
  input instance, where $N$ is a set of elements, $\calS = \{S_1,
  \ldots, S_m\}$ is a family of subsets of $N$ (each element from $N$
  appears in at most $p$ sets from $\calS$), and $K$ and $L$ are two
  integers. This is the decision variant of the problem, thus we have
  $L$ in the input; we ask if there is a collection of up to $K$ sets
  from $\calS$ that jointly cover at least $L$ elements.  W.l.o.g., we
  assume that $K \geq m$. We form single-tape nondeterministic Turing
  machine $M$ to execute the following algorithm (on empty input
  string); the idea of the algorithm is to employ the standard
  inclusion-exclusion principle:
  \begin{enumerate}
  \item Guess the indices $i_1, \ldots, i_K$ of $K$ sets from $\calS$.
  \item Set $T = 0$.
  \item For each subset $A$ of $\{i_1, \ldots, i_K\}$ of size up to
    $p$, do the following: If $\|A\|$ is odd, add $\|\bigcap_{i \in
      A}S_i\|$ to $T$, and otherwise subtract $\|\bigcap_{i \in
      A}S_i\|$ from $T$.
  \item If $T \geq L$ then we accept and otherwise we reject.
  \end{enumerate}
  It is easy to see that this algorithm can indeed be implemented on a
  single-tape nondeterministic Turing machine with a sufficiently
  large (but polynomially bounded) work alphabet and state space.  The
  only issue that might require a comment is the computation of
  $\|\bigcap_{i \in A}S_i\|$. Since sets $A$ contain at most $p$
  elements, we can precompute these values and store them in $M$'s
  transition function.

  The correctness of the algorithm follows directly from the inclusion-exclusion
  principle and the fact that each element appears in at most $p$ sets: 
%
  \begin{align*}
    \| S_{i_1} \cup S_{i_2} \cup \cdots \cup S_{i_K}\| &= \sum_{\ell \in [K]}\|S_{i_\ell}\| 
      -\!\!\! \sum_{\substack{\ell' \in [K] \\ \ell'' \in [K] \\ \ell' \neq \ell''}}\|S_{i_{\ell'}}\cap S_{i_{\ell''}}\| 
      &+ \sum_{\substack{\ell' \in [K]\\ \ell'' \in [K]\\ \ell''' \in [K] \\ \ell' \neq \ell'' \\ \ell' \neq \ell''' \\ \ell'' \neq \ell'''}}\|S_{i_{\ell'}}\cap S_{i_{\ell''}} \cap S_{i_{\ell'''}}\| 
      - \cdots 
  \end{align*}
  In general, the above formula should include intersections of up to
  $K$ sets. However, since in our case each element appears in at most
  $p$ sets, the intersection of more than $p$ sets are always
  empty. This shows that the algorithm is correct and concludes the
  proof.~
\end{proof}

For the sake of completeness, we mention that both the unrestricted
variant of the problem and the one where we put a lower bound on each
element's frequency are $\wtwo$-hard.

\begin{theorem}
  For each constant $p$, $p \geq 1$, MaxCover where each element
  belongs to at least $p$ sets if $\wtwo$-hard.
%
\end{theorem}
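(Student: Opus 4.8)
The plan is to establish $\wtwo$-hardness by a parametrized reduction from SetCover, which the preliminaries single out as the canonical $\wtwo$-complete problem (parametrized by the size $K$ of the cover). Given a SetCover instance $I = (N, \calS, K)$, I would first dispose of the degenerate case: if some element of $N$ lies in no set of $\calS$, then $I$ is a trivial no-instance and I output a fixed no-instance of the target problem; hence I may assume every element has current frequency $f_e \geq 1$. I keep $N$ and $\calS$ and, for every element $e \in N$ with $f_e < p$, add $p - f_e$ fresh copies of the singleton $\{e\}$ to the (indexed) family. Since $p$ is a constant this adds at most $p\,\|N\|$ sets, so the new instance $I'$ has size $\poly(|I|)$, and after the padding every element appears in exactly $\max(f_e, p) \geq p$ sets, meeting the required frequency lower bound. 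The output is the decision MaxCover instance with this augmented family, budget $K' = K$, and target $T = \|N\|$ (i.e., we ask to cover everything). The parameter is unchanged, so the reduction satisfies the size bound and the bound $K' \leq K$ from the definition of a parametrized reduction, with $g \equiv 1$ and $h(K) = K$.

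For correctness, the forward direction is immediate: a set cover of $N$ of size $K$ is verbatim a collection of at most $K$ (original) sets covering all $\|N\|$ elements, hence a yes-instance of the padded MaxCover problem. The reverse direction is where the argument must be careful, and it is the step I expect to be the main obstacle, since the padding singletons are themselves legitimate sets that a MaxCover solution is free to use. Suppose the padded instance admits a collection $\calC'$ with $|\calC'| \leq K$ and $\|\bigcup_{S \in \calC'} S\| = \|N\|$. I would argue the singletons can be removed without loss: for each singleton $\{e\} \in \calC'$, replace it by an arbitrary original set $R_e \in \calS$ with $e \in R_e$ (one exists because $f_e \geq 1$). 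The resulting collection $\calC$ uses only sets from $\calS$, has size at most $|\calC'| \leq K$, and covers everything $\calC'$ did, since each replacement covers the element its singleton covered; thus $\bigcup_{S \in \calC}S = N$ and $I$ is a yes-instance.

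The reason the reverse direction is the crux is that the MaxCover objective is monotone, so one cannot neutralize padding sets by having them cover extra ``poisonous'' material; the only safe padding is by sets so small (singletons) that any use of them is dominated by some original set covering at least as much. This observation, together with the choice $T = \|N\|$, is precisely what makes the replacement argument go through. Finally, the boundary case $p = 1$ requires no padding once the uncoverable elements are removed, and the reduction degenerates to the identity on SetCover, which is already $\wtwo$-hard; this confirms the statement for every $p \geq 1$.
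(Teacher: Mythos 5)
Your proposal is correct and is essentially the paper's own reduction: pad the SetCover instance with singleton sets to raise every element's frequency to at least $p$ and ask whether $K$ sets can cover all $\|N\|$ elements (the paper adds $p-1$ copies of $\{e\}$ for every element rather than $p-f_e$ copies, a cosmetic difference). The replacement argument you spell out for the reverse direction is exactly the step the paper dismisses with ``clearly,'' so your write-up is just a more detailed version of the same proof.
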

\begin{proof}
  To show $\wtwo$-hardness, we give a reduction from SetCover. In the SetCover
  problem we ask whether there exist $K$ subsets that cover all the elements
  (we give a reduction for the parameter $K$).
  Let $I = (N,\calS)$ be an input instance of
  SetCover.  W.l.o.g., we can assume that each element from $N$
  belongs to at least one set in $\calS$. We form an instance $I'$ of
  MaxCover which is identical to $I$, except (a) for each $e \in N$,
  we modify $\calS$ to  additionally include $p-1$ copies of set $\{e\}$, and
  (b) we run the MaxCover algorithm asking whether the maximal number of the
  elements covered by $K$ subsets is at least equal to $\|N\|$.
  Clearly, in $I'$ each element belongs to at least $p$ sets
  and $I'$ is a yes-instance of MaxCover if and only if $I$ is a
  yes-instance of SetCover.~
\end{proof}

So far, we were not able to show that MaxCover (even with
lower-bounded frequencies) is in $\wtwo$. Nonetheless, it is quite
easy to show that the problem belongs to $\wpclass$.

\begin{theorem}
  For each constant $p$, $p \geq 1$, MaxCover where each element
  belongs to at least $p$ sets is in $\wpclass$ (when parametrized by
  the number of sets in the solution).
\end{theorem}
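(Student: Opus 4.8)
The plan is to establish $\wpclass$-membership via a parameterized reduction to the Bounded-Nondeterministic-Turing-Machine-Computation problem, exactly as suggested in the preliminaries. I first observe that the lower-bound-on-frequencies assumption plays no role here: the version in which each element belongs to at least $p$ sets is a special case of unrestricted MaxCover, so it suffices to show that unrestricted MaxCover (in its decision form, with threshold $T$) lies in $\wpclass$. I would therefore take an arbitrary instance $I = (N,\calS,K,T)$ with $\calS = \{S_1,\ldots,S_m\}$ and build, in FPT time, a single-tape nondeterministic Turing machine $M$ (run on empty input) together with a polynomial bound on its total number of steps and a bound of $K$ on its number of nondeterministic steps.

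The machine $M$ operates in two phases. In the guessing phase it nondeterministically selects $K$ indices $i_1,\ldots,i_K \in [m]$ and writes them on the tape; in the verification phase it deterministically checks whether $\|\bigcup_{j=1}^K S_{i_j}\| \geq T$. The verification phase is purely deterministic and clearly runs in time polynomial in $|I|$: I can hardcode the incidence structure of $(N,\calS)$ into $M$'s transition function, then iterate over all $n$ elements of $N$ and, for each, test membership in each of the $K$ selected sets, accumulating the count of covered elements and comparing it to $T$. Since $M$ is produced by the reduction and is allowed to have size $g(K)\poly(|I|)$, encoding $\calS$ into its description causes no difficulty.

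The one genuine obstacle is controlling the number of nondeterministic steps, which must be bounded by a computable function $h(K)$ of the parameter alone and must not depend on $m$. A naive bit-by-bit guess of each index would cost $\Theta(\log m)$ nondeterministic steps per index, giving $\Theta(K\log m)$ in total, which violates the bound since it grows with the instance size. I resolve this by exploiting that $M$ is tailored to $I$: I equip $M$ with a work alphabet containing one dedicated symbol for each index in $[m]$, so that a single nondeterministic transition can branch into $m$ successor configurations, one writing each index-symbol. Guessing a single $i_j$ then costs exactly one nondeterministic step, and guessing all $K$ indices costs exactly $K$ nondeterministic steps. Thus the target parameter is $k = K$, satisfying $k \leq h(K)$ with $h$ the identity, while the total running time stays polynomial. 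This yields a valid parameterized reduction to Bounded-Nondeterministic-Turing-Machine-Computation and hence places MaxCover, and in particular its lower-bounded-frequency variant, in $\wpclass$.
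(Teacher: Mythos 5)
Your proposal is correct and follows essentially the same route as the paper: a reduction to Bounded-Nondeterministic-Turing-Machine-Computation in which the machine guesses the $K$ set names using one dedicated alphabet symbol per set (so only $K$ nondeterministic steps are used) and then deterministically verifies the coverage threshold in polynomial time. Your explicit justification of why the single-symbol-per-index trick keeps the nondeterministic step count at $K$ rather than $K\log m$ is a point the paper's proof leaves implicit, but the argument is the same.
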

\begin{proof}
  We give a reduction from MaxCover to the
  Bounded-Nondeterministic-Turing-Machine-Computation problem. On
  input $I = (N,\calS,K,T)$, where $N$, $\calS$, and $K$ are as usual
  and $T$ is the lower bound on the number of elements that we should
  cover, we produce a machine that on empty input executes the following
  algorithm:
  \begin{enumerate}
  \item It nondeterministically guesses up to $K$ names of sets from
    $\calS$ and writes these names on the tape (each name of a set
    from $\calS$ is a single symbol).
  \item Deterministically, for each name of the set produced in the
    previous step, the machine writes on the tape the names of those
    elements from this set that have not been written on the tape yet.
  \item The machine counts the number of names of elements written on
    the tape.  If there were at least $T$ of them, it accepts. Otherwise
    it rejects.
  \end{enumerate}
  It is easy to see that we can produce a description of such a
  machine in polynomial time with respect to $|I|$. Further, it is
  clear that its nondeterministic running time is bounded by some
  polynomial of $|I|$ and that it makes at most $k$ nondeterministic
  steps.
\end{proof}

It is quite interesting to also consider MaxCover with other
parameters.  First, recall that for parameter $T$, the number of
elements that we should cover, Bl{\"a}ser has shown that MaxCover is
in FPT~\cite{bla:j:partial-set-cover}.  What can we say about
parameter $T' = n-T$, i.e., the number of elements we can leave
uncovered (this, in essence, means considering the MinNonCovered
problem, but for the worst-case setting it is more convenient to speak
of the parameter $T'$)? In this case, the problem is immediately seen
to be para-$\np$-complete (that is, the problem is $\np$-complete even
for a constant value of the parameter).

\begin{corollary}\label{cor:teqzero}
  The MaxCover problem is para-$\np$-complete when parametrized by the
  number $T'$ of elements that can be left uncovered. This holds even
  if each element's frequency is upper-bounded by some constant $p$,
  $p \geq 2$.
\end{corollary}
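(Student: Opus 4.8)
The plan is to establish para-$\np$-completeness by showing that MaxCover is already $\np$-hard when the parameter $T'$ is fixed to the constant value $0$. Recall that $T' = 0$ means we are asking whether there is a subcollection of at most $K$ sets that leaves \emph{no} element uncovered, i.e., that covers all of $N$. But this is precisely the SetCover problem. Since SetCover is $\np$-complete, MaxCover with $T'$ fixed at $0$ is already $\np$-hard, and this is exactly what it means for the problem to be para-$\np$-complete with respect to the parameter $T'$.

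First I would observe that membership in $\np$ is immediate: given a candidate subcollection of size at most $K$, we can verify in polynomial time that it leaves at most $T'$ elements uncovered. Then, for hardness, I would reduce from SetCover. Given a SetCover instance $(N,\calS,K)$ asking whether $K$ sets suffice to cover all of $N$, I treat it directly as a MaxCover instance with the same $N$, $\calS$, and $K$, and set $T' = 0$. A yes-instance of SetCover corresponds exactly to a MaxCover solution covering all elements, i.e., leaving $T' = 0$ uncovered, and conversely. Since the value of the parameter is the constant $0$ in every instance produced by the reduction, this witnesses para-$\np$-completeness.

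For the strengthening to bounded frequency $p \geq 2$, I would additionally invoke the fact that even the restricted special case suffices. The cleanest route is to reduce from VertexCover rather than general SetCover: VertexCover asks whether $K$ vertices cover all edges of a graph $G = (V,E)$, and this is $\np$-complete. Casting $G$ as a MaxCover instance where the edges are the elements and the vertices are the sets gives an instance in which every element (edge) has frequency exactly $2$, hence at most any $p \geq 2$. Again fixing $T' = 0$, the VertexCover instance is a yes-instance if and only if the corresponding MaxCover instance admits a cover of all edges by $K$ vertices, so the $\np$-hardness persists under the frequency bound.

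The only point that requires a moment of care---and the closest thing to an obstacle---is checking that the reduction genuinely keeps the parameter $T'$ at a constant (indeed $0$), so that the hardness is ``para-$\np$'' and not merely ordinary $\np$-hardness for unbounded $T'$. This is automatic here because the reductions from SetCover and VertexCover both ask whether \emph{all} elements can be covered, which is exactly the $T' = 0$ condition; no auxiliary gadgets inflating $T'$ are introduced. Everything else is a direct translation of well-known $\np$-completeness results, so the statement follows without further machinery.
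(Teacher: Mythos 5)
Your proposal is correct and matches the paper's own argument exactly: the identity reduction from SetCover with $T'=0$ for the general case, and the switch to VertexCover (frequencies exactly $2$) for the bounded-frequency strengthening. The added remark on $\np$-membership and the explicit check that the parameter stays constant are fine but do not change the route.
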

\begin{proof}
  The following trivial reduction from SetCover suffices: Given an
  input instance $I = (N,\calS,K)$, output an instance
  $(N,\calS,K,0)$, i.e., an identical one, where we require that the
  number of elements left uncovered is $0$. Since the reduction is
  clearly correct and works for the constant value of the parameter,
  we get pare-$\np$-completeness. To obtain the result for upper-bounded frequencies,
  simply use VertexCover instead of SetCover in the reduction.
\end{proof}

However, if we consider the joint parameter $(K,T')$, then the
MaxCover problem becomes $\wtwo$-complete.

\begin{theorem}
  MaxCover is $\wtwo$-complete when parametrized by both the number
  $K$ of sets that can be used in the solution and the number $T'$ of
  elements that can be left uncovered.
\end{theorem}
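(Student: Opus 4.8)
The plan is to prove $\wtwo$-completeness in two parts: membership in $\wtwo$ and $\wtwo$-hardness, both for the combined parameter $(K,T')$.

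For membership, I would give a parametrized reduction from MaxCover with parameter $(K,T')$ to the SetCover problem (which is $\wtwo$-complete with parameter being the number of sets). The key observation is that covering all but $T'$ of the $n$ elements with $K$ sets is closely related to covering all elements, and the slack of $T'$ uncovered elements can be absorbed into the parameter. Given an instance $(N,\calS,K,T')$ of MaxCover, the idea is to guess which $T'$ elements are allowed to remain uncovered; equivalently, one adds to $\calS$ a small number of auxiliary singleton sets so that, by spending at most $T'$ of the budget on them, a SetCover solution can ``cover for free'' up to $T'$ troublesome elements. More concretely, I would augment the instance so that an exact set cover of size at most $K + T'$ (or a comparable parameter value) in the new instance exists if and only if the original MaxCover instance admits $K$ sets leaving at most $T'$ elements uncovered. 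Since both $K$ and $T'$ are parameters, $K+T'$ is a legitimate function of the parameter, so this is a valid parametrized reduction and establishes membership in $\wtwo$.

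For hardness, I would reduce from SetCover (parametrized by $K$) to MaxCover parametrized by $(K,T')$. This direction is essentially immediate: given a SetCover instance $(N,\calS,K)$, output the MaxCover instance $(N,\calS,K,0)$, asking whether $K$ sets can leave at most $T' = 0$ elements uncovered. This is exactly the question of whether $K$ sets cover all of $N$, so it is a yes-instance of MaxCover with $T'=0$ if and only if the original is a yes-instance of SetCover. Here $K' = K$ and $T' = 0$ is constant, so the parameter condition $(K',T') \leq h(K)$ is satisfied trivially. Since SetCover with parameter $K$ is $\wtwo$-complete, this yields $\wtwo$-hardness of MaxCover under the joint parameter.

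The main obstacle is the membership direction. The hardness argument is a one-line reduction (indeed it mirrors Corollary~\ref{cor:teqzero}), but showing membership in $\wtwo$ requires genuine care: the reduction to SetCover must correctly encode the ``allow $T'$ uncovered elements'' slack while keeping the new parameter bounded by a computable function of $(K,T')$ and the instance-size blowup bounded by $g(K,T')\poly(|I|)$. The delicate point is arranging the auxiliary gadget (the singleton sets or their analogue) so that spending budget on them precisely models leaving elements uncovered, without inadvertently allowing a cover that leaves more than $T'$ elements uncovered or that uses more than the allotted number of real sets. I would verify carefully that the correspondence between solutions is exact in both directions and that the parameter of the target SetCover instance remains a function only of $K$ and $T'$.
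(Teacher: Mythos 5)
Your hardness argument is exactly the paper's (the $T'=0$ reduction from SetCover), and your membership strategy --- a parametrized reduction to SetCover with target parameter $K+T'$, using auxiliary sets to absorb the slack of $T'$ uncovered elements --- is also the paper's. But there is a genuine gap precisely at the point you flag as ``delicate'' and then leave unresolved: plain auxiliary singleton sets do not work, and the missing idea is how to \emph{force the budget split} between the $K$ original sets and the $T'$ slack sets. With a single budget of $K+T'$ and free access to singletons $\{e\}$, nothing prevents a valid set cover of the augmented instance from using, say, $K-5$ original sets together with $T'+5$ singletons; that corresponds to a solution of the original instance leaving $T'+5$ elements uncovered, so the reduction would accept no-instances.

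The paper closes this hole with two families of dummy elements. It adds $D'=\{d'_1,\ldots,d'_K\}$ and replaces each original set $S$ by the copies $S\cup\{d'_i\}$; since each copy covers only one $d'_i$, covering $D'$ forces at least $K$ sets drawn from (copies of) the original family. It also adds $D''=\{d''_1,\ldots,d''_{T'}\}$ and the slack sets $\{e,d''_i\}$ for $e\in N$, $d''_i\in D''$; covering $D''$ forces at least $T'$ slack sets, and because each slack set is paired with exactly one element of $N$ (not a bare singleton), the slack sets can absorb at most $T'$ elements of $N$ in total. With total budget $K+T'$ these two constraints pin the split to exactly $K$ original-set copies and $T'$ slack sets, which is what makes the correspondence exact in both directions. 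Without some such forcing gadget your reduction is not correct, so this step needs to be supplied rather than merely ``verified carefully.''
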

\begin{proof}
  We obtain $\wtwo$-hardness by simply observing that the reduction
  given in Corollary~\ref{cor:teqzero} suffices. To prove
  $\wtwo$-membership, we give a reduction from MaxCover (with
  parameter $(K,T')$) to SetCover (with parameter $K$).

  Let $I = (N,\calS,K,T')$ be an input instance of MaxCover.  We
  form an instance $I' = (N',\calS',K+T')$ of SetCover as follows.
  Let $N' = N \cup D' \cup D''$, where $D' = \{d'_1, \ldots, d'_K\}$
  and $D'' = \{d''_1, \ldots, d''_{T'}\}$. For each set $S \in \calS$
  and each $d_i' \in D'$, we set $S(d_i') = S \cup \{d_i'\}$.  We set
  $\calS' = \calS'_1 \cup \calS'_2$, where (a) $\calS'_1 = \{ S(d_i'):
  (S \in \calS) \land (d_i' \in D') \}$, and (b) $\calS'_2 = \{ \{e,d''_i\}:
  e \in N, d''_i \in D''\}$.

  It is easy to see that if $I$ is a yes-instance of MaxCover
  then $I'$ is a yes-instance of SetCover: If for $I$ it is possible
  to cover $n-T'$ elements of $N$ using $K$ sets, then for $I$ it is
  possible to (a) use $K$ sets from $\calS'_1$ to cover $n-T'$
  elements from $N$ and all the elements from $D'$, and (b) use $T'$
  sets from $\calS'_2$ to cover all the elements from $D''$ and the
  remaining $T'$ elements from $N$.  For the other direction, assume
  that $I'$ is a yes-instance of SetCover. However, covering the
  elements from $D'$ requires one to use at least $K$ sets from
  $\calS'_1$ (which correspond to the sets from $\calS$) and
  covering the elements in $D''$ requires at least $T'$ sets from
  $\calS'_2$. Since each set from $\calS'_2$ covers exactly one
  element from $N$, it is easy to see that if $I'$ is a yes-instance,
  then it must be possible to cover at least $\|N\|-T'$ elements from
  $N$ using $K$ sets from $\calS$.
\end{proof}


\begin{table}
  \begin{center}
    \begin{tabular}{r|l}
      parameter  & {worst-case complexity of MaxCover} \\
      \hline
      \multirow{2}{*}{$K$}      & {$\wtwo$-hard, in $\wpclass$}\\
               &  $\wone$-complete for upper-bounded frequencies\\
      \rule{0cm}{5.5mm} 
      $T$      & FPT~\cite{bla:j:partial-set-cover} \\
      $(K,T)$  & FPT~\cite{bla:j:partial-set-cover} \\
      \rule{0cm}{5.5mm} 
      $T'$     & para-$\np$-complete \\
      $(K,T')$ & $\wtwo$-complete   \\
  \end{tabular}
  \caption{\label{tab:complexity}Parameterized worst-case complexity
    results for unrestricted MaxCover and MinNonCovered. The
    parameters are as follows: $K$ is the number of sets we can use in
    the solution, $T$ is the number of elements we are required to
    cover, and $T' = n-T$ is the number of elements we can leave
    uncovered.}
  \end{center}
\end{table}

We summarize our worst-case complexity resutls in
Table~\ref{tab:complexity}. Not surprisingly, using the parameter $T'$
(i.e., in essence, considering the MinNonCovered problem) leads to
higher computational complexity than using parameter $T$ (i.e., in
essence, considering the MaxCover problem). For the parameter $K$, the
exact complexity of unrestricted MaxCover remains open.

\section{Algorithms for the Case of Bounded Frequencies}\label{sec:approximation}

In this section we present our approximation algorithms for the
MaxCover and MinNonCovered problems, for the case where we either
upper-bound or lower-bound the frequencies of the elements. We first
consider the with MaxCover problem, both with upper-bounded
frequencies and with lower-bounded frequencies, and then move on to
the MinNonCovered problem with upper-bounded frequencies.

\subsection{The MaxCover Problem with Upper Bounded Frequencies}

We will now present an FPT approximation scheme for MaxCover with
upper-bounded frequencies. While
Marx~\cite{Marx06parameterizedcomplexity} has already shown an FPT
approximation scheme for MaxVertexCover, his approach cannot be
directly generalized to the MaxCover problem with bounded frequencies
(although there are some similarities between the algorithms).
Also, our algorithm for MaxCover applied to the MaxVertexCover problem
is considerably faster than the algorithm of Marx~\cite{Marx06parameterizedcomplexity}.
We will give a brief comparison of the two algorithms after presenting our approach.

Intuitively, our algorithm works in a very simple way. Given an
instance $I = (N,\calS,K)$ of MaxCover (with frequences bounded by
some constant $p$) and a required approximation ratio $\beta$, the
algorithm simply picks some of the sets from $\calS$ with highest
cardinalities (the exact number of these sets depends only on $K$, $p$, and
$\beta$), tries all $K$-element subcollections of sets from this
group, and returns the best one. This approach is formalized as
Algorithm~\ref{alg:pApproval}. The following theorem explains that
indeed the algorithm achieves a required approximation ratio.

\SetKwInput{KwParameters}{Parameters}
\begin{algorithm}[t]
   \small
 \KwParameters{\\$\hspace{3pt}$ $(N,\calS,K)$ --- input MaxCover instance\\
          $\hspace{3pt}$ $p$ --- bound on the number of sets each element can belong to\\
          $\hspace{3pt}$ $\beta$ --- the required approximation ratio of the algorithm \\}
   \vspace{2mm}
   \SetAlCapFnt{\small}
   $\mathcal{A} \leftarrow \lceil \frac{2p K}{(1 - \beta)} + K \rceil$ sets from $\calS$ with the highest cardinalities \;
   \ForEach{$K$-element subset $\mathcal{C}$ of $\mathcal{A}$}{
       $\quality[\mathcal{C}] \leftarrow$ the number of elements covered by $\mathcal{C}$\;
   }
   \Return{$\argmax_{\mathcal{C}} (\quality[\mathcal{C}])$} \;
   \caption{\small The algorithm for the MaxCover problem with
     frequency upper bounded by $p$.}
   \label{alg:pApproval}
\end{algorithm}

\begin{theorem}\label{thm:pApproval}
  For each instance $I = (N,\calS,K)$ of MaxCover where each element
  from $N$ appears in at most $p$ sets in $\calS$,
  Algorithm~\ref{alg:pApproval} outputs a $\beta$-approximate solution
  in time $\poly(n,m) \cdot {\frac{2p K}{(1 - \beta)} + K \choose K}$.
\end{theorem}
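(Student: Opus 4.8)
The running-time bound is the easy part, so I would dispose of it first: the set $\calA$ of the $\lceil \tfrac{2pK}{1-\beta}+K\rceil$ largest sets can be extracted in $\poly(n,m)$ time, there are exactly $\binom{|\calA|}{K}$ subcollections $\calC$ to examine, and each $\quality[\calC]$ is computed in $\poly(n,m)$ time; multiplying gives the stated bound. Hence the real content is the approximation guarantee, and it suffices to show that \emph{some} $K$-element subcollection of $\calA$ covers at least $\beta\OPT$ elements, since the brute-force loop then returns one at least as good.

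To prove this I would fix an optimal solution $\calC^{*}$ (w.l.o.g.\ of size exactly $K$) and split it as $\calC^{*}=\calC_{\mathrm{in}}\cup\calC_{\mathrm{out}}$, where $\calC_{\mathrm{in}}=\calC^{*}\cap\calA$ and $\calC_{\mathrm{out}}$ are the sets of $\calC^{*}$ not among the $|\calA|$ largest. Writing $V$ for the elements covered by $\calC_{\mathrm{in}}$ and $U$ for the remaining elements covered by $\calC^{*}$, we have $\OPT=|V|+|U|$. The plan is to keep $\calC_{\mathrm{in}}$ and replace the $r:=|\calC_{\mathrm{out}}|$ sets of $\calC_{\mathrm{out}}$, one by one, with $r$ fresh sets drawn from the pool $\calA\setminus\calC_{\mathrm{in}}$, which has at least $|\calA|-K=\tfrac{2pK}{1-\beta}$ members. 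It then suffices to show that the newly added sets cover at least $|U|-\tfrac{1-\beta}{2}\OPT$ previously uncovered elements, because the total coverage would then be at least $|V|+|U|-\tfrac{1-\beta}{2}\OPT=\tfrac{1+\beta}{2}\OPT\ge\beta\OPT$.

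The replacement step is where the two structural facts enter. First, every $S\in\calC_{\mathrm{out}}$ lies outside $\calA$, so $|S|\le c_{\min}:=\min_{S'\in\calA}|S'|$, giving $|U|\le\sum_{S\in\calC_{\mathrm{out}}}|S|\le r\,c_{\min}$; thus each fresh set, being in $\calA$, is individually large (size $\ge c_{\min}$). Second, the frequency bound yields $\sum_{S\in\calS}|S\cap Y|\le p|Y|$ for the current cover $Y$, so by averaging the pool always contains an unused set $R$ with $|R\cap Y|\le p|Y|/(\text{pool size})$. Selecting such an $R$ at each of the $r\le K$ steps, and using that the pool never drops below $\tfrac{2pK}{1-\beta}$ and that $|Y|\le\OPT$ throughout (our partial solution has at most $K$ sets), each addition contributes at least $c_{\min}-\tfrac{p\,\OPT}{2pK/(1-\beta)}=c_{\min}-\tfrac{(1-\beta)\OPT}{2K}$ new elements. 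Summing over the $r$ steps gives new coverage at least $r\,c_{\min}-\tfrac{1-\beta}{2}\OPT\ge|U|-\tfrac{1-\beta}{2}\OPT$, exactly as required.

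The main obstacle, and the reason the pool must be as large as $\tfrac{2pK}{1-\beta}$, is controlling overlaps: a naive swap loses coverage both to elements already in $V$ and to collisions among the newly added sets, and bounding these by the crude frequency factor $\tfrac1p$ would only give a ratio like $\beta/p$. The device that avoids this is to process the replacements sequentially against the running cover $Y$, folding both kinds of overlap into the single quantity $|R\cap Y|$, and then to exploit the averaging bound $|R\cap Y|\le p|Y|/(\text{pool size})$, which is small precisely because the pool is large; the cardinality inequality $r\,c_{\min}\ge|U|$ is what makes the per-step gains accumulate to essentially all of $U$. I would finish by disposing of the degenerate cases: if $\calC_{\mathrm{out}}=\emptyset$ then $\calC^{*}\subseteq\calA$ and the algorithm is exactly optimal, and one checks that $\tfrac{2pK}{1-\beta}\ge r$ so the pool can always supply $r$ distinct replacements.
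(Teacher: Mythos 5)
Your proposal follows essentially the same strategy as the paper's proof: fix an optimal solution, keep the part lying inside $\calA$, and replace the remaining sets one by one with sets from $\calA$, using the size of $\calA$ together with the frequency bound $p$ to show each swap costs little coverage. The only difference is in the bookkeeping (you average $\|R\cap Y\|$ directly over the pool against the single running cover $Y$ with $\|Y\|\le\OPT$, whereas the paper argues by contradiction about the best replacement and bounds against the union of the old and new solutions, of size at most $2\OPT$), and both accountings succeed.

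One small correction: the averaging step must be performed over the \emph{unused} part of the pool, which after $i-1$ replacements has only $\lceil\frac{2pK}{1-\beta}+K\rceil-K-(i-1)\ge\frac{2pK}{1-\beta}-K+1$ members, not $\frac{2pK}{1-\beta}$ as you state (an already-chosen replacement $R$ has $\|R\cap Y\|=\|R\|\ge c_{\min}$ and cannot be ruled out as the below-average set otherwise). With the corrected denominator the per-step loss is $\frac{p\,\OPT}{2pK/(1-\beta)-K+1}$, the total loss over $r\le K$ steps is at most $\frac{p(1-\beta)}{2p-1+\beta}\OPT\le(1-\beta)\OPT$ since $p\ge 1$, and the final coverage is still at least $\beta\OPT$; your intermediate claim of $\frac{1+\beta}{2}\OPT$ does not survive the correction, but the theorem does.
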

\begin{proof}
  It is immediate to establish the running time of the algorithm.  We
  show that its approximation ratio is, indeed,~$\beta$.

  Consider some input instance $I$.  Let $\calC$ be the solution
  returned by Algorithm~\ref{alg:pApproval} and let $\calC^{*}$ be
  some optimal solution. Let $c$ be an arbitrary function such that
  for each element $e$ such that $\exists_{S \in \calC^{*}}: e \in S$,
  $c(e)$ is some $S \in \calC^{*}$ such that $e \in S$. We refer to
  $c$ as the \emph{coverage function}. Intuitively, the coverage
  function assigns to each element covered under $\calC^{*}$ (by,
  possibly, many different sets) the particular set ``responsible''
  for covering it.
%
%
  We say that $S$ covers $e$ if and only if $c(e) = S$. Let $\OPT$
  denote the number of elements covered by $\calC^{*}$.

  We will show that $\calC$ covers at least $\beta\OPT$
  elements. Naturally, the reason why $\calC$ might cover fewer
  elements than $\calC^{*}$ is that some sets from $\calC^*$ may not
  be present in $\calA$, the set of the subsets considered by the algorithm.
  We will show an iterative procedure that
  starts with $\calC^{*}$ and, step by step, replaces those members of
  $\calC^*$ that are not present in $\calA$ with the sets from
  $\calA$. The idea of the proof is to show that each such replacement
  decreases the number of covered element by at most a small amount.

  Let $\ell = \|\calC^{*} \setminus \calC\|$. Our procedure will
  replace the $\ell$ sets from $\calC^*$ that do not appear in $\calC$
  with $\ell$ sets from $\calA$.  We renumber the sets so that
  $\calC^{*} \setminus \calC = \{S_1, \ldots, S_\ell\}$. We will
  replace the sets $\{S_1, \ldots, S_\ell\}$ with sets $\{S'_1,
  \ldots, S'_\ell\}$ defined through the following algorithm.  Assume
  that we have already computed sets $S'_1, \ldots, S'_{i-1}$ (thus
  for $i=1$ we have not yet computed anything). We take $S'_i$ to be a
  set from $\mathcal{A} \setminus (\mathcal{C}^{*} \cup \{S_1', \dots,
  S_{i-1}'\})$ such that the set $(\mathcal{C}^{*} \setminus \{S_1,
  \dots, S_{i}\}) \cup \{S_1', \dots, S_{i}'\}$ covers as many
  elements as possible. During the $i$'th step of this algorithm,
  after we replace $S_i$ with $S_i'$ in the set $(\mathcal{C}^{*}
  \setminus \{S_1, \dots, S_{i-1}\}) \cup \{S_1', \dots, S_{i-1}'\}$,
  we modify the coverage function as follows:
  \begin{enumerate}
  \item for each element $e$ such that $c(e) = S_i$, we set $c(e)$ to be undefined;
  \item for each element $e \in S'_i$, if $c(e)$ is undefined then we set $c(e) = S'_i$.
  \end{enumerate}

  After replacing $S_i$ with $S'_i$, it may be the case that fewer
  elements are covered by the resulting collection of sets. Let $x_i$
  denote the difference between the number of elements covered by
  $(\mathcal{C}^{*} \setminus \{S_1, \dots, S_{i}\}) \cup \{S_1',
  \dots, S_{i}'\}$ and by $(\mathcal{C}^{*} \setminus \{S_1, \dots,
  S_{i-1}\}) \cup \{S_1', \dots, S_{i-1}'\}$ (or $0$, if by a
  fortunate coincidence there are more elements covered after
  replacing $S_i$ with $S'_i$). By the construction of the set $\calA$
  and the fact that $S_i \notin \calA$, each set from $\mathcal{A}$
  contains more elements than $S_i$. Thus we infer that every set from
  $\mathcal{A} \setminus (\mathcal{C}^{*} \cup \{S_1', \dots,
  S_{i-1}'\})$ must contain at least $x_i$ elements covered by
  $(\mathcal{C}^{*} \setminus \{S_1, \dots, S_{i-1}\}) \cup \{S_1',
  \dots, S_{i-1}'\}$. Indeed, if some set $S' \in \mathcal{A}
  \setminus (\mathcal{C}^{*} \cup \{S_1', \dots, S_{i-1}'\})$
  contained fewer than $x_i$ elements covered by $(\mathcal{C}^{*}
  \setminus \{S_1, \dots, S_{i-1}\}) \cup \{S_1', \dots, S_{i-1}'\}$,
  $S'$ would have to cover at least \[\|S'\|-(x_i-1) \geq \|S_i\|
  -(x_i-1)\] elements uncovered by $(\mathcal{C}^{*} \setminus \{S_1,
  \dots, S_{i-1}\}) \cup \{S_1', \dots, S_{i-1}'\}$. But this would
  mean that after replacing $S_i$ with $S'$, the difference between
  the number of covered elements would be at most $(x_i - 1)$.


  Let $\calC^{*}_{2}$ denote the set obtained after the
  above-described $\ell$ iterations. Since, for each $i$, the set
  $(\calC^{*} \setminus \{S_1, \dots, S_{i-1}\}) \cup \{S_1', \dots,
  S_{i-1}'\}$ is a subset of $\calC^{*} \cup \calC^{*}_{2}$, we know
  that, for each $i$, each set from $\mathcal{A} \setminus
  (\mathcal{C}^{*} \cup \{S_1', \dots, S_{\ell}'\})$ (there is
  $\|\mathcal{A}\| - K$ such sets) must contain at least $x_i$
  elements covered by $\calC^{*} \cup \calC^{*}_{2}$ (there is at most
  $2\OPT$ such elements).  Since each element is contained in at most
  $p$ sets, we infer that for each $i$, $x_i(\|\mathcal{A}\| - K) \leq
  2\OPT p$ and, as a consequence, $x_i \leq \frac{2\OPT
    p}{\|\mathcal{A}\| - K} = \frac{2\OPT p(1-\beta)}{2pK}$.  Thus we
  conclude that (recall that $\ell \leq K$):
  \begin{align*}
    \sum_{i=1}^\ell x_i \leq 2\OPT p K \frac{(1 - \beta)}{2p K} = (1 - \beta)\OPT
  \end{align*}
  That is, after our process of replacing the sets from $\calC^*$ that
  do not appear in $\calC$ with sets from $\calA$, at most
  $(1-\beta)\OPT$ elements fewer are covered. This means that there
  are $K$ sets in $\calA$ that together cover at least $\beta\OPT$
  elements. Since the algorithm tries all size-$K$ subsets of $\calA$,
  it finds a solution that covers at least $\beta\OPT$ elements.~
\end{proof}

Our analysis is tight up to the constant factor of $\frac{3}{4}$.
Below we present a family of parameters $\beta$ and instances of
MaxCover with upper-bounded frequencies on which our algorithm
achieves approximation ratio $(\frac{3}{4} + \frac{3}{4}\beta)$


\begin{proposition}
  There is a family $\calI$ of pairs $(I,\beta)$ where $I$ is an
  instance of MaxCover with bounded frequencies and $\beta$ is a real
  number, $0 < \beta < 1$, such that for each $(I,\beta) \in \calI$,
  if we use Algorithm~\ref{alg:pApproval} to find a
  $\beta$-approximate solution for $I$, it outputs an at-most
  $((\frac{3}{4} + \frac{3}{4}\beta)\OPT(I))$-approximate one.
\end{proposition}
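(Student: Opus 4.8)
The plan is to exhibit a family of instances on which the optimal sets are deliberately kept out of the group $\calA$ of high-cardinality sets examined by Algorithm~\ref{alg:pApproval}, so that the algorithm is confined to a collection of ``decoy'' sets whose every $K$-element subcollection covers only about $\frac34(1+\beta)\OPT$ elements. Note first that only the range $0<\beta\le\frac13$ is meaningful: for $\beta>\frac13$ we have $\frac34+\frac34\beta>1$, so the claim holds vacuously because coverage never exceeds $\OPT$. Concretely, I would take the optimal solution to consist of $K$ pairwise-disjoint sets each of cardinality $s$, so that $\OPT=Ks$, and then add a pool of decoy sets each of cardinality at least $s$; choosing the pool to contain at least $a:=\lceil\frac{2pK}{1-\beta}+K\rceil$ sets (ties broken against the optimal sets) guarantees that $\calA$ consists entirely of decoys and that the $K$ optimal sets are never available to the algorithm. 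The approximation ratio on the instance is then exactly the coverage of the best $K$-subcollection of the decoy pool divided by $Ks$.

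The heart of the construction is the overlap pattern of the decoys, which must be dense enough that no $K$ of them cover much, yet must respect the frequency bound $p$. A natural, fully symmetric first choice is to index the decoys by $[a]$ and to create, for every $p$-element subset $A\subseteq[a]$, a batch of identical elements lying in precisely the decoys of $A$ (so each has frequency exactly $p$); scaling all batches by a common multiplicity $t$ lets $s$, and hence $\OPT$, be taken large. By symmetry \emph{every} $K$-subcollection of decoys covers the same number of elements, namely
\[
  t\left(\binom{a}{p}-\binom{a-K}{p}\right)=t\sum_{j=0}^{K-1}\binom{a-1-j}{p-1},
\]
while each decoy has cardinality $t\binom{a-1}{p-1}=s$. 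The ratio therefore collapses to $\frac1K\sum_{j=0}^{K-1}\binom{a-1-j}{p-1}\big/\binom{a-1}{p-1}$, an expression depending only on $a$, $p$, and $K$, which one estimates by $\frac1K\sum_{j=0}^{K-1}(1-\tfrac{j}{a})^{p-1}$ after substituting $a=\lceil\frac{2pK}{1-\beta}+K\rceil$. Each term is at most $1$, confirming that the decoys cover strictly fewer than $\OPT$ elements and hence that the disjoint optimal sets really are the optimum.

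The last step is to bound this quantity by $\frac34+\frac34\beta$, and this is exactly where the main obstacle lies. The frequency bound works \emph{against} the construction: since each element lies in at most $p$ decoys, the decoys cannot be made too redundant, and the symmetric design above bottoms out near $2\bigl(1-e^{-(1-\beta)/2}\bigr)$, which is slightly above $\frac34$ at $\beta=0$. To reach the stated constant I would abandon the uniform design in favour of a carefully tuned, possibly non-uniform sharing pattern (for instance letting the decoys overlap in blocks of controlled size, or mixing decoys of two cardinalities) chosen so that the best-for-the-algorithm $K$-subcollection still wastes a full $\frac14(1-\beta)$ fraction of the coverage. Verifying that \emph{every} $K$-subcollection (not merely the average one) is this wasteful, while simultaneously keeping the decoys at cardinality at least $s$ and every frequency at most $p$, is the crux of the argument; letting the multiplicity $t$ and the scale $s$ grow then removes the rounding in $a$ and yields the clean bound $\frac34+\frac34\beta$.
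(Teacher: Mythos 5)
Your construction is the paper's construction: a pool of ``decoy'' sets indexed by $[a]$ with one element (or a batch of $t$ identical elements) for every $p$-element subset of $[a]$, plus $K$ pairwise-disjoint optimal sets of the same cardinality $\binom{a-1}{p-1}$, with ties broken so that $\calA$ consists entirely of decoys. Your exact count of what any $K$ decoys cover, $t\left(\binom{a}{p}-\binom{a-K}{p}\right)$, is the same quantity the paper computes via $h(j)=\binom{K}{j}\binom{a-K}{p-j}$. The genuine gap is that you stop before establishing the claimed constant: you evaluate the symmetric design at $\beta\to 0$, find the ratio tends to $2(1-e^{-1/2})\approx 0.787>\frac34$, announce that a ``carefully tuned, possibly non-uniform sharing pattern'' is needed, and never produce one. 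That redesign is not a routine verification --- in the regime you chose it is the entire content of the claim --- so as written the proof is incomplete.

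The obstacle is, however, largely self-inflicted. The paper does not work at small $\beta$: it takes $\beta$ close to $1$ (and $p$, $K$, $a$ large), where the loss of the symmetric design is $h(2)/\OPT\approx Kp/(2a)\approx(1-\beta)/4$ to first order, i.e., the achieved ratio is $1-\frac{1-\beta}{4}+O((1-\beta)^2)$; the displayed $\frac34+\frac34\beta$ is evidently a slip for $1-\frac14(1-\beta)=\frac34+\frac14\beta$, and for $\beta$ near $1$ the stated bound is in any case vacuous, as you observe. Your own exact formula, which in full is $\frac{2}{1-\beta}\left(1-e^{-(1-\beta)/2}\right)$ (you dropped the factor $\frac{1}{1-\beta}$, harmlessly at $\beta=0$), reduces to exactly $1-\frac{1-\beta}{4}$ as $\beta\to1$, so the symmetric design already proves the intended statement with no redesign. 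Even under your stricter literal reading, that formula lies below $\frac34+\frac34\beta$ for all $\beta$ in, say, $[\frac1{10},\frac13)$ (e.g., $0.805$ versus $0.825$ at $\beta=0.1$), so restricting the family to that range closes the proof non-vacuously with the construction you already have; the failure you identified occurs only as $\beta$ approaches $0$, a regime the existential statement does not require you to cover.
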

\begin{proof}
  We describe how to construct pairs $(I,\beta)$ from the set
  $\calI$. We let $p$ be the bound of the frequencies of elements in
  $I$ and we let $K$ be the number of sets that we can use in the solution. We choose
  $p$ and $K$ to be sufficiently large, and $\beta$ to be sufficiently
  close to $1$ (the exact meaning of ``sufficiently large'' and
  ``sufficiently close to $1$'' will become clear at the end of the
  proof; elements of $\calI$ differ in the particular choices of $p$,
  $K$, and $\beta$).  We require that $\frac{1}{1-\beta}$ is an
  integer and that $p$ divides $K$.

  We now proceed with the construction of instance $I = (N,\calS,K)$
  for our choice of $p$, $K$, and $\beta$.  We set $x = \frac{2p K}{(1
    - \beta)} + K$; $x$ is the number of highest-cardinality sets from
  $\calS$ that Algorithm~\ref{alg:pApproval} will consider on instance
  $I$. By our choice of $\beta$ and $K$, $x$ is an integer and is
  divisible by $p$.  We form $N$, the set of elements to be covered,
  to consist of two disjoint subsets, $N_1$ and $N_2$, such that
  $\|N_1\| = {x \choose p}$ and $\|N_2\| = {x \choose p}
  \frac{Kp}{x}$.  We form the family $\calS$ to consist of two
  subfamilies, $\calS_1$ and $\calS_2$, defined as follows:
  \begin{enumerate}
  \item There are $x$ subsets in $\calS_1$, $\calS_1 = \{S_1, \ldots,
    S_x\}$. We form the sets in $S_1$ so that: (a) sets from $\calS_1$
    are subsets of $N_1$, (b) each element from $N_1$ belongs to
    exactly $p$ different sets from $\calS_1$, and (c) no two elements
    from $N_1$ belong to the same $p$ sets from $\calS_1$.
    Specifically, we build sets $(S_1, \ldots, S_m)$ as follows. Let
    $f$ be some one-to-one mapping between elements in $N_1$ and
    $p$-element subsets of $[x]$. For each $e \in N_1$, $e$ belongs
    exactly to the sets $S_{i_1}, \ldots, S_{i_p}$ such that $f(e) =
    \{i_1, \ldots, i_p\}$.  Note that each set $S_i \in \calS_1$
    contains exactly ${x-1 \choose p-1} = {x \choose p}\frac{p}{x}$
    elements.

  \item $\calS_2$ contains $K$ sets, each covering exactly ${x \choose
      p} \frac{p}{x}$ different elements from $N_2$ (and no other
    elements) so that no two sets from $\calS_2$ overlap.
  \end{enumerate}
  This completes our description of $I$. It is easy to see that each
  optimal solution for $I$ covers exactly $K{x \choose p}\frac{p}{x}$
  elements; each set contains exactly ${x \choose p}\frac{p}{x}$
  elements and, there are $K$ that are pairwise disjoint (for example
  the $K$ sets in $\calS_2$).

  Nonetheless, Algorithm~\ref{alg:pApproval} is free to choose any $x$
  sets from $\calS$ to include within $\calA$, the collection of sets
  from which it forms the solution, and, in particular, it is free to
  pick the $x$ sets from $\calS_1$.\footnote{We could also ensure that
    each set in $\calS_1$ contained one of $\frac{x}{p}$ additional
    elements, forcing the algorithm to pick exactly the sets from
    $\calS_1$, but that would obscure the presentation of our
    argument.} 

  Let us fix some arbitrary collection $\calS'$ of $K$ sets from
  $\calS_1$. For each $j$, $0 \leq j \leq K$, let $h(j)$ be the number
  of elements from $N_1$ that belong to exactly $j$ sets in $\calS'$.
  The number of elements covered by $\calS'$ is exactly $K{x \choose
    p}\frac{p}{x} - \sum_{j=2}^K(j-1)h(j)$.  How to compute $h(j)$?
  Using mapping $f$, it suffices to count the number of $p$-element
  subsets of $[x]$ that contain the indices of exactly $j$ sets from
  $\calS'$. In effect, we have $h(j) = {K \choose j}{{x - K} \choose
    p-j}$. We upper bound the number of sets covered by $\calS'$ with:
  \[ 
  K{x \choose p}\frac{p}{x} - h(2) = K{x \choose p} \frac{p}{x} - {K \choose 2}{x-K \choose p-2}.
  \]


  Consequently, on instance $I$ Algorithm~\ref{alg:pApproval} achieves
  the following approximation ratio $\frac{K{x \choose p} \frac{p}{x}
    - {K \choose 2}{x-K \choose p-2}}{K{x \choose p} \frac{p}{x}}$,
  which is equal to:
  \begin{align*}
    1 - \frac{{K \choose 2}{x-K \choose p-2}}{K{x \choose p}
      \frac{p}{x}}
  =  1 - \frac{{K \choose 2}{x -K \choose p}
      \frac{p(p-1)}{(x-K-p+2)(x-K-p+1)}}{K{x \choose p} \frac{p}{x}}
    \textrm{.}
  \end{align*}
  Now, if $x$ is large in comparison with $p$ and $K$ (which happens
  for sufficiently large $\beta$), then $\frac{{x -K \choose p}}{{x
      \choose p}} \approx 1$. Also, for sufficiently large $x$ and $p$
  (and for $x \gg p, K$) we have $\frac{p}{x-K-p+2} \approx
  \frac{p}{x}$ and $\frac{p-1}{x-K-p+1} \approx \frac{p}{x}$. Finally,
  for sufficiently large $K$ we have ${K \choose 2} \approx
  \frac{K^2}{2}$. Thus, for large values of $\beta$, $K$, and $p$, we
  can approximate the above ratio with the following expression:
  \begin{align*}
    1 - \frac{\frac{K^2}{2} \cdot \frac{p^2}{x^2}}{K\frac{p}{x}} = 1 - \frac{1}{2} \cdot \frac{Kp}{\frac{2p K}{(1 - \beta)} + K} \approx 
    1 - \frac{1}{2} \cdot \frac{Kp}{\frac{2p K}{(1 - \beta)}} = 1 -
    \frac{1}{4} \cdot(1 - \beta) = \frac{3}{4} + \frac{3}{4}\beta
    \textrm{.}
  \end{align*}
%
%
  This completes our argument.~
\end{proof}

Let us now compare our algorithm to that of
Marx~\cite{Marx06parameterizedcomplexity} for the case of MaxVertexCover.
Briefly put, the idea behind Marx's algorithm is as follows: Consider
vertices in the order of nonincreasing degrees. If the degree of the
vertex with the 
highest degree is large enough, then $K$ vertices with
the highest degrees already cover sufficiently many edges to give a
desired approximate solution. If the 
highest degree is not large enough, then there is an exact,
color-coding based, FPT algorithm that solves the problem
optimally. Our algorithm is similar in the sense that we also focus on
a group of sets with highest cardinalities (sets' cardinalities in
MaxCover correspond to vertex degrees in MaxVertexCover). However,
instead of simply picking $K$ largest ones, we make a careful decision
as to which exactly to take.\footnote{Indeed, it is possible to build
  an example where picking sets with highest cardinalities would not
  work. This trick works in Marx's algorithm because he considers
  graphs and, thus, can bound the negative effect of covering the same
  element by different sets; in the MaxCover problem this seems
  difficult to do.}
%
%
Further, our algorithm has a better running time than that of Marx.
To achieve approximation ratio $\beta$, the algorithm presented by
Marx has running time at least
$\Omega((\frac{k^3}{1-\beta})^{(\frac{k^3}{1-\beta})})$. For us, the
exponential factor in the running time is ${\frac{2p K}{(1 - \beta)} +
  K \choose K}$. On the other hand, we should point out that Marx's
algorithm's running time stems mostly from the exact part and the
algorithm given there is interesting in its own right.

\subsection{The MaxCover Problem with Lower-Bounded Frequencies}

Let us now move on the case of MaxCover with lower-bounded
frequencies.  It turns out that in this case the standard greedy
algorithm, given here as Algorithm~\ref{alg:greedy}, can---for
appropriate inputs---achieve a better approximation ratio than in the
unrestricted case.

\begin{algorithm}[t]
   \small
   \SetAlCapFnt{\small}
   \KwParameters{\\$\hspace{3pt}$ $(N,\calS,K)$ --- input MaxCover instance\\
      $\hspace{3pt}$ $p$ --- lower bound on the number of the sets each element belongs to}
   \vspace{3mm}

   $C = \{\}$\;
   \For{$i\leftarrow 1$ \KwTo $K$}{
      $\Cov \leftarrow \{e \in N: \exists_{S \in C} e \in S\}$ \;
      $S_{\best} \leftarrow \argmax_{S \in \{S_1, \dots, S_m\} \setminus C}$ 
      $\{e \in N \setminus \Cov: e \in S\}\|$\;
      $C \leftarrow C \cup \{S_{\best}\}$
   }
   \Return{C}
   \caption{\small The algorithm  for the MaxCover problem with frequency lower bounded by $p$.}
   \label{alg:greedy}
\end{algorithm}

\begin{theorem}\label{theorem:greedy}
  Algorithm~\ref{alg:greedy} is a polynomial-time $(1 -
  e^{-\frac{pK}{m}})$-approximation algorithm 
  for the MaxCover problem with frequency lower bounded by $p$,
  on instances with $m$ elements where we can pick up to $K$ sets.
\end{theorem}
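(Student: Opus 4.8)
The plan is to run the classical greedy-coverage analysis, but to extract an improved per-iteration gain from the frequency lower bound $p$. Write $\OPT$ for the number of elements covered by a fixed optimal size-$K$ subcollection, and for $i = 0, 1, \ldots, K$ let $c_i$ be the number of elements covered by the greedy solution after its first $i$ iterations (so $c_0 = 0$), and set $d_i = \OPT - c_i$. The heart of the argument is a single-step inequality: in iteration $i+1$ the set chosen by greedy covers at least $\frac{p}{m}\, d_i$ previously-uncovered elements, so that $d_{i+1} \le \left(1 - \frac{p}{m}\right) d_i$. Iterating this bound from $d_0 = \OPT$ gives $d_K \le \OPT \left(1 - \frac{p}{m}\right)^K$, and the elementary inequality $1 - x \le e^{-x}$ then yields $d_K \le \OPT\, e^{-pK/m}$, hence $c_K \ge \OPT\left(1 - e^{-pK/m}\right)$, which is exactly the claimed ratio.

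To establish the single-step inequality, fix an iteration and let $R$ be the set of elements that are covered by the optimal solution but not yet covered by greedy. Then $\|R\| \ge \OPT - c_i = d_i$, because greedy has covered only $c_i$ elements in total and so can account for at most $c_i$ of the $\OPT$ optimally-covered elements. Now I would sum, over all sets $S \in \calS$, the quantity $\|S \cap R\|$: since every element of $R$ has frequency at least $p$, this sum is at least $p\,\|R\| \ge p\, d_i$. Crucially, any set already chosen by greedy is disjoint from $R$ (the elements of $R$ are by definition still uncovered), so the whole sum is contributed by sets not yet chosen; as there are only $m$ sets in $\calS$, some not-yet-chosen set $S$ must satisfy $\|S \cap R\| \ge \frac{p}{m}\, d_i$. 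Because the elements of $R$ are all uncovered, selecting this $S$ would add at least $\frac{p}{m}\, d_i$ new elements, and greedy by definition picks a set that is at least this good; this gives $d_{i+1} \le \left(1 - \frac{p}{m}\right)d_i$ (the degenerate case $d_i \le 0$ being immediate, since then $c_i \ge \OPT$ already).

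The remaining points are routine: the chain $d_{i+1} \le \left(1 - \frac{p}{m}\right) d_i$ telescopes to $d_K \le \OPT\left(1 - \frac{p}{m}\right)^K$, the bound $1 - x \le e^{-x}$ converts this into $d_K \le \OPT\, e^{-pK/m}$, and the polynomial running time of the $K$-iteration greedy loop is obvious from Algorithm~\ref{alg:greedy}. I expect the only genuinely delicate step to be the averaging argument: one must average $\|S \cap R\|$ over all $m$ sets of the instance, exploiting the frequency lower bound, rather than over the $K$ sets of the optimal solution as in the classical $\left(1 - \frac{1}{e}\right)$ proof. It is precisely this change in the counting that replaces the usual factor $\frac{1}{K}$ by the stronger factor $\frac{p}{m}$, which in turn improves the guarantee whenever $\frac{pK}{m} > 1$.
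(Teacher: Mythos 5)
Your proof is correct and follows essentially the same route as the paper's: the key step in both is to average the at-least-$p$ incidences of each still-uncovered element over the $m$ sets of $\calS$, which by pigeonhole yields a not-yet-chosen set covering at least a $\frac{p}{m}$ fraction of those elements, and this per-iteration gain is then iterated $K$ times and converted via $1-x\le e^{-x}$. The only cosmetic difference is that the paper tracks the total number of uncovered elements, obtaining the slightly stronger conclusion that greedy covers at least $n(1-e^{-pK/m})$ of all $n$ elements and invoking $\OPT\le n$ only at the end, whereas you track the deficit $\OPT-c_i$ against a fixed optimal solution.
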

\begin{proof}
  The algorithm clearly runs in polynomial time and so we show it's
  approximation ratio. Let $I = (N,\calS,K)$ be an input instance of
  MaxCover and let $p$ be an integer such that each element from $N$
  belongs to at least $p$ sets from $\calS$.
  
  We prove by induction that for each $i$, $0 \leq i \leq K$, after
  the $i$'th iteration of Algorithm~\ref{alg:greedy}'s main loop, the
  number of uncovered elements is at most $n(1 -
  \frac{p}{m})^{i}$. Naturally, for $i=0$ the number of uncovered
  elements is exactly $n$, the total number of elements.  Suppose that
  the inductive assumption holds for some $(i-1)$, $1 \leq i < K$ and
  let $x$ be the number of elements still uncovered after the
  $(i-1)$-th iteration (by the inductive assumtpion, we have $x \leq
  n(1 - \frac{p}{m})^{i-1}$). Since each element belongs to at least
  $p$ sets and neither of the sets containing the uncovered elements
  is yet selected, by the pigeonhole principle there is a
  not-yet-selected set that contains at least $\lceil x\frac{p}{m}
  \rceil$ of the uncovered elements.  In consequence, the number of
  elements still uncovered after the $i$-th iteration is at most:
  \begin{align*}
    x - x\frac{p}{m} = x\left(1 - \frac{p}{m}\right) \leq n\left(1 - \frac{p}{m}\right)^{i}
    \textrm{.}
  \end{align*}
  Thus after $K$ iterations the number of uncovered elements is at
  most:
  \begin{align*}
    n\left(1 - \frac{p}{m}\right)^{K} = n\left(1 - \frac{p}{m}\right)^{\frac{m}{p}\cdot
      \frac{pK}{m}} \leq ne^{-\frac{pK}{m}}\textrm{.}
  \end{align*}
  Since the number of covered elements in the optimal solution is at
  most $n$, the algorithm's approximation ratio is $(1 -
  e^{-\frac{pK}{m}})$.~
\end{proof}

Naturally, the standard approximation ratio of $(1 - e^{-1})$ of the
greedy algorithm still applies and we get the following corollary.

\begin{corollary}
Algorithm~\ref{alg:greedy} gives approximation guarantee of $(1 - e^{-\max(\frac{pK}{m}, 1)})$.
\end{corollary}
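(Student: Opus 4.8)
The plan is to observe that two approximation guarantees hold \emph{simultaneously} for Algorithm~\ref{alg:greedy}, and then to combine them using the monotonicity of the map $x \mapsto 1 - e^{-x}$. First I would note that Algorithm~\ref{alg:greedy} is nothing more than the standard MaxCover greedy algorithm: at each of its $K$ iterations it adds the set that covers the largest number of not-yet-covered elements. Since this standard greedy algorithm is known to achieve approximation ratio $1 - \frac{1}{e} = 1 - e^{-1}$ on \emph{every} instance of MaxCover (the classical bound recalled in Section~\ref{sec:related}), that guarantee applies to the output of Algorithm~\ref{alg:greedy} regardless of any frequency assumption.

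Second, Theorem~\ref{theorem:greedy} shows that on instances where each element belongs to at least $p$ of the $m$ sets, the very same algorithm achieves ratio $1 - e^{-\frac{pK}{m}}$. Thus both $1 - e^{-1}$ and $1 - e^{-\frac{pK}{m}}$ are valid lower bounds on the fraction of $\OPT$ covered by Algorithm~\ref{alg:greedy}'s output. Because $1 - e^{-x}$ is strictly increasing in $x$, the better (larger) of these two bounds is the one with the larger exponent, so keeping whichever is stronger yields exactly $1 - e^{-\max(\frac{pK}{m},\, 1)}$, as claimed.

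The argument carries essentially no technical obstacle; the only point requiring a moment's care is to confirm that the two guarantees genuinely refer to the same algorithm and the same quantity — the number of covered elements measured against the same optimum $\OPT$ — so that one is free to retain whichever bound is stronger on a given input. Once that identification is made, the corollary follows immediately from the monotonicity of $1 - e^{-x}$.
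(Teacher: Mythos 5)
Your proposal is correct and matches the paper's reasoning exactly: the paper derives this corollary by observing that the classical $(1-e^{-1})$ guarantee of the greedy algorithm still applies alongside the $(1-e^{-pK/m})$ bound of Theorem~\ref{theorem:greedy}, so one simply keeps the stronger of the two. Your additional remark about the monotonicity of $x \mapsto 1-e^{-x}$ just makes explicit why taking the maximum in the exponent is the right way to combine them.
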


The analysis given in Theorem~\ref{theorem:greedy} is tight. Below we
present a family of instances on which the algorithm reaches exactly
the promised approximation ratio.

\begin{proposition}
  For each $\alpha$, $\alpha \geq 1$, there is an instance $I(\alpha)$
  of MaxCover (with $m$ sets. element frequency lower-bounded by $p$,
  $K$ sets to use, and $\frac{pK}{m} = \alpha$) such that on input
  $I(\alpha)$, Algorithm~\ref{alg:greedy} achieves approximation ratio
  no better than $(1 - e^{-\frac{pK}{m}})$.
\end{proposition}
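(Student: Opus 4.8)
The plan is to exhibit, for each $\alpha \geq 1$, an instance on which the greedy Algorithm~\ref{alg:greedy} covers essentially a $(1-e^{-pK/m})$ fraction of what an optimal solution covers. I would arrange that some $K$ sets cover all of $N$, so that $\OPT = n$, while the greedy trajectory is forced to shed only a $\tfrac{p}{m}$-like fraction of the still-uncovered elements in each of its $K$ rounds, leaving roughly $n e^{-pK/m}$ elements uncovered. Throughout I fix $p,K,m$ with $\tfrac{pK}{m}=\alpha$ and choose them divisible enough that all cardinalities below are integers; as will become clear, to drive the ratio all the way down to $1-e^{-\alpha}$ I also let $p$ (and hence $m$ and $K=\alpha m/p$) grow.

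The construction has two ingredients sharing the same ground family $\calS=\{S_1,\dots,S_m\}$. The first is a \emph{balanced trap}: a symmetric block design in which each element has frequency exactly $p$ and, by symmetry, every not-yet-chosen set covers the same number of currently-uncovered elements. Concretely, taking one element (or $t$ copies) for each $p$-subset $A\subseteq[m]$ and placing it exactly in the sets indexed by $A$ forces that, after any $i$ greedy picks, the uncovered elements are precisely those whose index set avoids the chosen sets; each remaining set then covers a $\tfrac{p}{m-i+1}$ fraction of them. Iterating, greedy leaves a $\binom{m-K}{p}\big/\binom{m}{p}$ fraction of the trap uncovered, and one checks the identity $\binom{m-K}{p}/\binom{m}{p}=\prod_{i=1}^{K}\bigl(1-\tfrac{p}{m-i+1}\bigr)$. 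The second ingredient enlarges the sets with a disjoint ``optimum'' part so that a particular $K$-subcollection covers all of $N$, giving $\OPT=n$; the additions are calibrated so that they never make those sets strictly more attractive to greedy than the balanced choices (ties being broken against greedy), so that greedy's balanced trajectory is preserved.

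Given this, the ratio that greedy achieves is $1-\binom{m-K}{p}/\binom{m}{p}$. The final step is the asymptotic bookkeeping: with $\tfrac{pK}{m}=\alpha$ fixed and $p\to\infty$ one has $\binom{m-K}{p}/\binom{m}{p}\to\bigl(1-\tfrac{K}{m}\bigr)^{p}=\bigl(1-\tfrac{\alpha}{p}\bigr)^{p}\to e^{-\alpha}=e^{-pK/m}$, so the ratio tends to $1-e^{-pK/m}$ from above, matching the bound of Theorem~\ref{theorem:greedy}.

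The main obstacle is the tension in the second ingredient: the trap must be \emph{fully} coverable by $K$ sets (so that $\OPT=n$) and yet present greedy with nothing better than the wasteful, heavily overlapping balanced picks. I expect this to be the delicate part, since the purely symmetric design that makes greedy balanced is simultaneously greedy-optimal; breaking the symmetry just enough to let the optimum pull ahead---while keeping every element's frequency at least $p$---is exactly what the explicit gadget must accomplish. A secondary subtlety is that the per-round marginal is governed by $\tfrac{p}{m-i+1}$ rather than the $\tfrac{p}{m}$ of the theorem's pigeonhole step; this is why the exact bound $1-(1-\tfrac{p}{m})^{K}$ is not attainable for $K\geq 2$ and why the family must send $p\to\infty$ to reach $1-e^{-pK/m}$ in the limit.
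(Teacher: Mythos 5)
Your overall strategy is the right one---a symmetric ``trap'' design that forces greedy into heavily overlapping picks, a disjoint family of $K$ sets realizing $\OPT=n$, and an asymptotic computation showing the uncovered fraction tends to $e^{-\alpha}$---and your asymptotic bookkeeping at the end is essentially the same as the paper's. But the proof is not complete: the gadget you yourself flag as ``the delicate part'' is precisely the content of the construction, and you leave it unbuilt. As you observe, the fully symmetric trap (one element per $p$-subset of $[m]$, frequency exactly $p$) cannot be covered by any $K$ sets, so $\OPT<n$ there; and if you instead attach a disjoint ``optimum part'' to some sets, those sets still fail to cover the trap elements whose index sets avoid them. Saying that the additions must be ``calibrated'' to resolve this tension names the problem rather than solving it.

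The paper's resolution is worth internalizing because it is not a small perturbation of the symmetric design. It partitions $N$ into $K$ disjoint blocks $N_1,\dots,N_K$, each of size $\binom{m-K}{p-1}$, and takes the optimal solution to be $\calS_2=\{N_1,\dots,N_K\}$ itself---so full coverage by $K$ sets is automatic. The trap family $\calS_1$ consists of only $m-K$ sets, and each element of each block is placed in exactly $p-1$ of them (via a bijection between $N_i$ and the $(p-1)$-subsets of $[m-K]$), giving total frequency exactly $p$ once the block membership is counted. The reason greedy never touches $\calS_2$ is then a concrete inequality: a trap set gains ${m-K-\ell-1 \choose p-2}$ fresh elements \emph{in each of the $K$ blocks simultaneously}, for a total of $K\frac{p-1}{m-K-\ell}{m-K-\ell \choose p-1}$, which exceeds the ${m-K-\ell \choose p-1}$ a block set would gain exactly because $pK>m$ (i.e.\ $\alpha>1$). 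This ``frequency $p-1$ in the trap plus $1$ in the optimum'' split, with the trap sets cutting across all $K$ blocks, is the symmetry-breaking you were looking for; without it (or an equivalent device) the proposition is not established.
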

\begin{proof}
  Let us fix some $\alpha$, $\alpha > 1$.  We choose integers $p$,
  $K$, and $m$ so that: (a) $p = \frac{\alpha m}{K}$, (b) $m \gg K$
  (and, thus, $p \gg K$), and (c) $p$, $m$, and $K$ are sufficiently
  large (the exact meaning of ``sufficiently large'' will become clear
  at the end of the proof). 

  We form instance $I(\alpha) = (N,\calS,K)$ as follows.
  %
  %
  We let $N = N_1 \cup \cdots \cup N_K$, where $N_1, \ldots, N_K$ are
  pairwise-disjoint sets, each of cardinality ${m-K \choose p-1}$
  (thus $\|N\| = K{m-K \choose p-1}$).  The family $\calS$ consists of
  two subfamilies, $\calS_1$ and $\calS_2$:
  \begin{enumerate}
  \item $\calS_1$ consists of $m - K$ sets, $S_{1}, \ldots, S_{m-K}$,
    constructed as follows. For each $i$, $1 \leq i \leq K$, let $f_i$
    be some one-to-one mapping from $N_i$ to $(p-1)$-element subsets
    of $[m-K]$. For each $i$, $1 \leq i \leq K$, if $e \in N_i$ and
    $f_i(e) = \{j_1, \ldots, j_{p-1}\}$ then we include $e$ in sets
    $S_{j_1}, S_{j_2}, \ldots, S_{j_{p-1}}$.  Note that for each
    $S_\ell$ in $\calS_2$, $\|S_\ell\| = K { m - K \choose p-2 }$; for
    each $i$, $1 \leq i \leq K$, $S_\ell$ contains ${m - K \choose
      p-2}$ elements from $N_i$; to see this, it suffices to count how
    many $(p-1)$-elements subsets of $[m-K]$ there are that contain $j$.

  \item $\calS_2 = \{N_1, \ldots, N_K\}$.
  \end{enumerate}
  Note that, by our construction, each element from $N$ belongs to
  exactly $p$ sets from $\calS$ ($p-1$ from $\calS_1$ and one from
  $\calS_2$).

  Naturally, the $K$ disjoint sets from $\calS_2$ form the optimal
  solution and cover all the elements.  We will now analyze the
  operation of Algorithm~\ref{alg:greedy} on input $I(\alpha)$.

  We claim that Algorithm~\ref{alg:greedy} will select sets from
  $\calS_1$ only. We show this by induction. Fix some $\ell$, $1 \leq
  \ell \leq K$, and suppose that until the beginning of the $\ell$'th
  iteration the algorithm chose sets from $\calS_1$ only.
  This means that, for each $i$, $1 \leq i \leq K$, each set $N_i$
  contains exactly ${m-K-\ell \choose p-1}$ uncovered elements. Why is
  this the case? Assume that the algorithm selected sets $S_{j_1},
  \ldots, S_{j_\ell}$. An element $e \in N_i$ is uncovered if and only
  if $f_i(e) \cap \{j_1, \ldots, j_\ell\} = \emptyset$; ${m-K-\ell
    \choose p-1}$ is the number of $(p-1)$-element subsets of $[m-K]$ that
  do not contain any members of $\{j_1, \ldots, j_\ell\}$.  So, if in
  the $\ell$'th iteration the algorithm choses some set from $\calS_2$,
  it would cover these additional ${m-K-\ell \choose p-1}$
  elements. On the other hand, if it chose a set from $\calS_1$, it
  would additionally cover $Kx$ elements, where $x = {m-K-\ell \choose
    p-1} - {m-K-\ell-1 \choose p-1}$.  By our choice, we have $pK > m$
  and, thus, $K > \frac{m-K}{p-1}$. We can now see that the
  following holds:
  \begin{align*}
    Kx &= K\left({m-K-\ell \choose p-1} - {m-K-\ell-1 \choose p-1}\right) 
     = K{m-K-\ell-1 \choose p-2}\\ &= \frac{K(p-1)}{m-K-\ell}{m-K-\ell \choose p-1} 
     \geq K\frac{p-1}{m-K}{m-K-\ell \choose p-1} > {m-K-\ell
      \choose p-1} \textrm{.}
  \end{align*}
  That is, in the $\ell$'th iteration Algorithm~\ref{alg:greedy} picks
  a set from $\calS_1$. This proves our claim.

  Let us now assess the approximation ratio Algorithm~\ref{alg:greedy}
  achieves on $I(\alpha)$. By the above reasoning, we know that it
  leaves ${m-2K \choose p-1}$ uncovered elements in each $N_i$, $1
  \leq i \leq K$.  Thus the fraction of the uncovered elements is
  bounded by the following expression (see some explanation below):
  \begin{align*}
    \frac{K{m-2K \choose p-1}}{K{m - K \choose p-1}} &= \frac{(m-2K)!(m-p-K+1)!}{(m-K)!(m-p-2K+1)!} \\ &= 
    \frac{(m-p-K+1)(m-p-K)\dots(m-p-2K)}{(m-K)(m-K-1)\dots(m-2K+1)} \\ & \geq 
    \left(\frac{m-2K-p}{m-2K+1}\right)^{K} = \left(1 -
      \frac{p+1}{m-2K+1}\right)^{K} \approx e^{-\frac{pK}{m}}
    \textrm{.}
  \end{align*}
  The first inequality holds by iterative application of the simple
  observation that if $1 \leq x \leq y$ then $\frac{x-1}{y-1} \leq
  \frac{x}{y}$.  To obtain the final estimate, we observe that for
  sufficiently large $p$ and $m$ (where $m \gg K$), we have
  $\frac{p+1}{m-2K+1} \approx \frac{p}{m} = \frac{\alpha}{K}$. For
  sufficiently large $K$, $(1 - \frac{\alpha}{K})^K \approx
  e^{-\alpha} = e^{-\frac{pK}{m}}$ (by the fact that $p = \frac{\alpha
    m}{K}$).  Since the optiomal solution covers all the elements, we
  have that Algorithm~\ref{alg:greedy} on input $I(\alpha)$ achieves
  approximation ratio no better than $1-e^{-\frac{pK}{m}}$.~
\end{proof}

Theorem~\ref{theorem:greedy} has some interesting implications. Let us
consider a version of the MaxCover problem in which the ratio
$\frac{p}{m}$ between the frequency lower bound $p$ and the number of
sets $m$ is constant.  This problems arises, e.g., if we use
approval-based variant of the Chamberlin-Courant's election system
with a requirement that each voter must approve at least some constant
fraction (e.g. 10\%) of the candidates. There exists a polynomial-time
approximation scheme (PTAS) for this version of the problem.

\begin{definition}
  For each $\alpha$, $0 < \alpha \leq 1$,  let
  $\alpha$-MaxCover be a variant of MaxCover for instances that
  satisfy the following conditions: If $p$ is a lower-bound on the
  frequencies of the elements and there are $m$ sets, then
  $\frac{p}{m} \geq \alpha$.
\end{definition}

\begin{theorem}\label{thm:ptas}
  For each $\alpha$, $0 < \alpha \leq 1$, there is a PTAS for
  $\alpha$-MaxCover.
\end{theorem}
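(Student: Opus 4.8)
The plan is to exploit the fact that the greedy guarantee of Theorem~\ref{theorem:greedy} \emph{improves} as $K$ grows, and to patch up the remaining case by brute force. Fix a desired accuracy $\epsilon$, $0 < \epsilon < 1$; I will describe a polynomial-time algorithm that returns a $(1-\epsilon)$-approximate solution for every $\alpha$-MaxCover instance, which is exactly what a PTAS requires.

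First I would analyze the behavior of greedy. By Theorem~\ref{theorem:greedy}, Algorithm~\ref{alg:greedy} covers at least $(1 - e^{-\frac{pK}{m}})\OPT$ elements. On an $\alpha$-MaxCover instance we have $\frac{p}{m} \geq \alpha$ by definition, hence $\frac{pK}{m} \geq \alpha K$ and the greedy ratio is at least $1 - e^{-\alpha K}$. Consequently, as soon as $\alpha K \geq \ln(1/\epsilon)$, equivalently $K \geq K_0 := \lceil \alpha^{-1}\ln(1/\epsilon)\rceil$, Algorithm~\ref{alg:greedy} alone already yields a $(1-\epsilon)$-approximation in polynomial time. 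The crucial point is that the crossover value $K_0$ depends only on $\alpha$ and $\epsilon$, and not on the size of the input instance.

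The complementary case is $K < K_0$. Here $K$ is bounded by the constant $K_0$, so I can afford to solve the instance \emph{exactly}: enumerate all $\binom{m}{K}$ subcollections of $\calS$ of size $K$, compute for each the number of elements it covers, and return the best one. Since $\binom{m}{K} \leq m^{K_0}$ and $K_0$ is a constant (for fixed $\alpha$ and $\epsilon$), this brute-force search runs in $\poly(n,m)$ time and returns an optimal solution, which is trivially a $(1-\epsilon)$-approximation. Putting the two branches together — greedy when $K \geq K_0$, exhaustive search when $K < K_0$ — gives, for every $\epsilon>0$, a polynomial-time $(1-\epsilon)$-approximation algorithm, i.e. a PTAS.

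I do not expect any genuine obstacle in this argument; the proof is essentially a clean case split driven by the $K$-dependence of the greedy bound. The only step requiring care is the verification that $K_0$ is a constant independent of the instance, so that the enumeration branch is truly polynomial; this is immediate from the explicit value $K_0 = \lceil \alpha^{-1}\ln(1/\epsilon)\rceil$, since $\alpha$ and $\epsilon$ are fixed constants while $n$ and $m$ are the only quantities allowed to grow.
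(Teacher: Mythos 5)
Your proof is correct and follows essentially the same route as the paper: use the bound of Theorem~\ref{theorem:greedy} (which on $\alpha$-MaxCover instances gives ratio at least $1-e^{-\alpha K}$) when $K$ exceeds a constant threshold depending only on $\alpha$ and the target accuracy, and brute-force over all $K$-element subcollections otherwise. The only cosmetic difference is that the paper states the threshold as $-\frac{m}{p}\ln(1-\beta)$ while you round it up to $\lceil \alpha^{-1}\ln(1/\epsilon)\rceil$; both are constants for fixed $\alpha$ and accuracy, so the argument is the same.
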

\begin{proof}
  Fix some $\alpha$, $0 < \alpha \leq 1$.  Let $I = (N,\calS,K)$ be
  input instance of $\alpha$-MaxCover and let $\beta$ be our desired
  approximation ratio.  We let $m$ be the number of set in $\calS$ and
  $p$ be the lower bound on element frequencies. By definition, we
  have $\frac{p}{m} \geq \alpha$.  If $K > -\frac{m}{p}\ln(1 - \beta)$
  then we can run Algorithm~\ref{alg:greedy} and, by
  Theorem~\ref{theorem:greedy}, we obtain approximation ratio
  $\beta$. Otherwise, $K$ is bounded by a constant and enumerating all
  $K$-element subsets of $\calS$ gives a polynomial exact algorithm
  for the problem.~
\end{proof}

The exact complexity of $\alpha$-MaxCover is quite interesting.  Using
Algorithm~\ref{alg:greedy}, we show that it belongs to the second
level of Kintala and Fisher's $\beta$-hierarchy of limited
nondeterminism~\cite{fis-kin:j:beta}. In effect, it is unlikely that
the problem is $\np$-complete.

\begin{definition}[Kintala and Fisher~\cite{fis-kin:j:beta}]
  For each positive integer $k$, $\beta^k$ is the class of decision
  problems that can be solved in polynomial time, using additionally
  at most $O(\log^kn)$ nondeterministic bits (where $n$ is the size of
  the input instance).
\end{definition}
It is easy to see that $\beta^1$ is simply the class of problems
solvable in polynomial time; we can simulate $O(\log n)$ bits of
nondeterminism by trying all possible combinations. However, class
$\beta^2$ appears to be greater than $\p$ but smaller than $\np$ (of
course, since we do not know if $\p \neq \np$, this is only a
conjecture).

\begin{theorem}
  For each $\alpha$, $0 < \alpha < 1$, the decision variant of
  $\alpha$-MaxCover is in $\beta^2$.
\end{theorem}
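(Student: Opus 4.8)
The plan is to exhibit a single polynomial-time procedure that consumes only $O(\log^2 |I|)$ nondeterministic bits, where $|I|$ is the length of the encoding of the instance, and to do so by splitting into two regimes according to the magnitude of $K$. Throughout I write $n = \|N\|$ for the number of elements and $m$ for the number of sets, and I exploit the quantitative bound established inside the proof of Theorem~\ref{theorem:greedy}: after its $K$ iterations, Algorithm~\ref{alg:greedy} leaves at most $n(1 - \tfrac{p}{m})^{K} \leq n\, e^{-\frac{pK}{m}} \leq n\, e^{-\alpha K}$ elements uncovered, where the last step uses $\frac{p}{m} \geq \alpha$, which holds by the definition of $\alpha$-MaxCover.

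First I would fix the threshold $K^{\star} = \lceil \frac{1}{\alpha}\ln(2n) \rceil$. In the regime $K \geq K^{\star}$ we have $\alpha K \geq \ln(2n)$, so $n\, e^{-\alpha K} \leq \tfrac{1}{2} < 1$; since the number of uncovered elements is a nonnegative integer bounded by this quantity, it must be $0$. Hence in this regime Algorithm~\ref{alg:greedy} already covers all $n$ elements deterministically in polynomial time. Because $T \leq n$ for every sensible decision instance, covering everything certifies a ``yes'' answer, so in this branch the procedure simply runs greedy and accepts, using no nondeterminism at all. In the complementary regime $K < K^{\star}$, we have $K = O(\log n) = O(\log |I|)$, using $n \leq |I|$. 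Here the procedure nondeterministically guesses the indices of the (up to $K$) chosen sets, then deterministically computes how many elements they jointly cover and accepts iff this count is at least $T$. Guessing the indices costs at most $K \lceil \log_2 m \rceil$ bits, and since $K = O(\log|I|)$ and $\log_2 m \leq \log_2 |I|$ (as $m \leq |I|$), this is $O(\log |I|)\cdot O(\log |I|) = O(\log^2 |I|)$ nondeterministic bits, with polynomial-time verification. Combining the two branches yields a polynomial-time algorithm making $O(\log^2 |I|)$ nondeterministic steps, which is exactly membership in $\beta^2$.

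I expect the main obstacle to be pinning down the nondeterministic-bit budget in the small-$K$ branch, i.e.\ verifying that $K\log m = O(\log^2|I|)$. This is precisely where the greedy guarantee does the work: it forces $K = O(\log n)$ in the \emph{only} regime in which any nondeterminism is spent, because any larger $K$ lets the deterministic greedy algorithm settle the instance outright. A secondary point requiring care is the off-by-one in choosing $K^{\star}$: the bound $n\,e^{-\alpha K}$ must be driven strictly below $1$ (not merely to $1$) so that the rounded, integer number of uncovered elements is genuinely $0$; taking $K^{\star}$ a small additive constant above $\frac{1}{\alpha}\ln n$, as above, resolves this cleanly.
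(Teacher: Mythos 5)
Your proof is correct and follows essentially the same route as the paper's: threshold $K$ at roughly $\frac{1}{\alpha}\ln n$, run the greedy algorithm deterministically above the threshold, and below it guess the $K$ set names using $K\log m = O(\log^2|I|)$ nondeterministic bits. The only (immaterial) difference is that you drive $n e^{-\alpha K}$ below $1$ to conclude greedy covers \emph{everything}, whereas the paper concludes greedy is \emph{optimal} from the ratio $1-\frac{1}{n}$ and integrality; just make sure the large-$K$ branch still compares the covered count against $T$ rather than accepting unconditionally.
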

\begin{proof}
  Fix some $\alpha$, $0 < \alpha < 1$.  We will give a
  $\beta^2$-algorithm for $\alpha$-MaxCover. Let $I = (N,\calS,K,T)$
  be an instance of $\alpha$-MaxCover (recall that $T$ is the number
  of elements we are required to cover). We let $p$ be the lower bound
  on elements' frequencies in $I$, we let $m = \|\calS\|$, and we let
  $n = \|N\|$. By definition, we have $\frac{p}{m} \geq
  \alpha$. W.l.o.g., we assume that $\|I\| \geq n+m$.

  Our algorithm works as follows. If $K > \frac{1}{\alpha}\ln(n)$
  then we run Algorithm~\ref{alg:greedy} and output its
  solution. Otherwise, we guess $K$ names of the sets from $\calS$ and
  check if these sets cover at least $T$ elements. If so, we accept
  and otherwise we reject on this computation path. 
  
  First, it is clear that the algorithm uses at most $O(\log^2|I|)$
  nondeterministic bits.  We execute the nondeterministic part of the
  algorithm only if $K < \frac{1}{\alpha}\ln(n) \leq
  \frac{1}{\alpha}\ln|I|$ and each set's name requires at most $\log
  m \leq \log |I|$ bits. Altogether, we use at most
  $O(\log^2|I|)$ bits of nondeterminism.
 
  Second, we need to show the correctness of the algorithm. Clearly,
  if the algorithm uses the nondeterministic part then certainly it
  finds an optimal solution. Consider then that the algorithm uses the
  deterministic part, based on Algorithm~\ref{alg:greedy}.  In this
  case we know that $K > \frac{1}{\alpha}\ln(n)$. Thus, the
  approximation ratio of Algorithm~\ref{alg:greedy} is greater than:
  $1 - e^{-\alpha K} > (1 - e^{-\ln n}) = 1 - \frac{1}{n}$. That is,
  the algorithm returns a solution that covers more than
  $\OPT(1-\frac{1}{n})$ elements and, since $\OPT \leq n$ and the
  number of covered elements is integer, the algorithm must find an
  optimal solution.
\end{proof}

\subsection{The MinNonCovered Problem}
In this section we considered the MinNonCovered problem, that is, a
version of MaxCover where the goal is to minimize the number of
elements left uncovered. In this case we give a randomized FPT
approximation scheme (presented as
Algorithm~\ref{alg:pApprovalDissat}).

Intuitively, the idea behind our approach is to extend a simple
bounded-search-tree algorithm for SetCover with upper-bounded
frequencies to the case of MaxCover. An FPT algorithm for SetCover
with frequencies upper-bounded by some constant $p$ could work
recursively as follows: If there still is some uncovered element $e$,
then nondeterministically guess one of the at-most-$p$ sets that
contain $e$ and recursively solve the smaller problem. The recursion
tree would have at most $K$ levels and $p^K$ leaves. The same approach
does not work directly for MaxCover because we do not know which
element $e$ to pick (in SetCover the choice is irrelevant because we
have to cover all the elements). However, it turns out that if we
choose $e$ randomly then, in expectation, we achieve a good result.

\begin{algorithm}[t!]
   \footnotesize
   \SetKwInput{KwParameters}{Parameters}
   \SetKwFunction{RecursiveSearch}{RecursiveSearch}
   \SetKwFunction{Main}{Main}
   \SetKwBlock{Block}
   \SetAlCapFnt{\footnotesize}
   \KwParameters{\\
          $\hspace{3pt}$ $(N,\calS,K)$ --- input MinNonCovered instance \\
          $\hspace{3pt}$ $p$ --- bound on the number of sets each element can belong to \\

          $\hspace{3pt}$ $\beta$ --- the required approximation ratio of the algorithm \\
          $\hspace{3pt}$ $\epsilon$ --- the allowed probability of achieving worse than $\beta$ approximation ratio
}
\vspace{3mm}	
    \RecursiveSearch{$s$, $\partialsol$}:
	\Block{
		\eIf{$s$ = $0$}
		{
			\Return{$\partialsol$}\;
		}
		{
			$e \leftarrow$ randomly select element not-yet covered by\\
							$\hspace{0.7cm}\partialsol$\;
			$\best \leftarrow \emptyset$\;
			\ForEach{$S \in \calS$ such that $e \in S$} 
			{
				$\sol \leftarrow$ \RecursiveSearch{$(s - 1)$, 
                    $\partialsol \cup \{S\}$}\;
				\If{$\sol$ {\rm\bf is better than} $\best$}
				{
					$\best \leftarrow \sol$\;
				}
			}
			\Return{$\best$}\;
		}
	}
	\hspace{5mm} \\
	\Main{}:
	\Block{
		$\best = \emptyset$\;
		\For{$i\leftarrow 1$ \KwTo $\left\lceil -\ln \epsilon/\left(\frac{\beta - 1}{\beta}\right)^K \right\rceil$}{
			$\sol$ = \RecursiveSearch{$K$, $\emptyset$}\;
			\If{$\sol$ {\rm\bf is better than} $\best$}
			{
				$\best \leftarrow \sol$\;
			}
		}
		\Return{$\best$}\;
	}

   \caption{\small The algorithm for the MinNonCovered problem with frequency upper bounded by $p$.}
   \label{alg:pApprovalDissat}
\end{algorithm}

\begin{theorem}\label{thm:pApprovalDissat}
  Algorithm~\ref{alg:pApprovalDissat} outputs a $\beta$-approximate
  solution for the MinNonCovered problem with probability $(1 -
  \epsilon)$. The time complexity of the algorithm is \[\poly(n,m) \cdot
  \left\lceil -\ln \epsilon/\left(\frac{\beta - 1}{\beta}\right)^K
  \right\rceil \cdot p^{K}\].
\end{theorem}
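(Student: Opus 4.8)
The plan is to separate the two claims and treat the running-time bound first, since it is immediate. A single call to \texttt{RecursiveSearch}$(K,\emptyset)$ generates a recursion tree of depth $K$ in which every node branches over the at-most-$p$ sets containing the randomly chosen uncovered element, so the tree has $O(p^{K})$ nodes, and at each node the bookkeeping (drawing a random uncovered element, extending the partial solution, counting coverage) costs $\poly(n,m)$. The \texttt{Main} loop repeats this $\lceil -\ln\epsilon/(\frac{\beta-1}{\beta})^{K}\rceil$ times, which yields exactly the stated time bound. It therefore remains to establish the probabilistic guarantee, and the heart of the matter is the following single-run claim: one call to \texttt{RecursiveSearch}$(K,\emptyset)$ returns a $\beta$-approximate solution with probability at least $\left(\frac{\beta-1}{\beta}\right)^{K}$.

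To prove this claim I would fix an optimal solution $\calC^{*}$ (with exactly $K$ sets, padding if fewer suffice) leaving $\OPT$ elements uncovered. Since \texttt{RecursiveSearch} returns the best leaf of its subtree, it suffices to exhibit one root-to-leaf path whose partial solution is $\beta$-approximate and to lower bound the probability that the random draws realize it. The path follows $\calC^{*}$: at a step where the current partial solution is a subset of $\calC^{*}$ leaving $u$ elements uncovered, call the step \emph{successful} if the drawn uncovered element $e$ is covered by some still-unselected set of $\calC^{*}$, in which case the corresponding branch selects one more optimal set and we descend along it. The key counting observation is that an uncovered element fails to be good only when it is uncovered by $\calC^{*}$ itself, and there are exactly $\OPT$ such elements; hence the number of good uncovered elements equals $u-\OPT$ and the step succeeds with probability $\frac{u-\OPT}{u}=1-\frac{\OPT}{u}$. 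Whenever $u\ge\beta\OPT$ this is at least $1-\frac1\beta=\frac{\beta-1}{\beta}$, and as soon as $u<\beta\OPT$ the current partial solution, and hence any length-$K$ extension of it, is already $\beta$-approximate, so no further luck is needed. Thus at most $K$ successful steps are required, each of conditional probability at least $\frac{\beta-1}{\beta}$, and the entire path is realized with probability at least $\left(\frac{\beta-1}{\beta}\right)^{K}$. Note that this bound is independent of the frequency bound $p$, which enters only through the size of the tree.

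Given the single-run success probability $q\ge\left(\frac{\beta-1}{\beta}\right)^{K}$, the amplification in \texttt{Main} finishes the argument: the probability that all $t=\lceil-\ln\epsilon/q\rceil$ independent runs fail is at most $(1-q)^{t}\le e^{-qt}\le e^{-\ln(1/\epsilon)}=\epsilon$, so the returned solution is $\beta$-approximate with probability at least $1-\epsilon$. I expect the main obstacle to be making the ``follow-$\calC^{*}$'' path fully rigorous: one must fix the deterministic tie-break that selects a single unselected optimal set when several cover $e$, verify that the conditional probabilities multiply correctly along the sequence of (randomly determined) nodes actually visited, and dispose of the boundary case $\OPT=0$ separately (there every step is good with probability $1$ and the run recovers an exact optimum). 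The identity ``(good uncovered) $=u-\OPT$'' is the crux and is best isolated as an invariant maintained throughout the descent.
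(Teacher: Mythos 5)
Your proposal is correct and follows essentially the same route as the paper: fix an optimal solution, follow the branch of the recursion tree that stays inside $\calC^{*}$, and use the dichotomy that at each node either at least a $\frac{\beta-1}{\beta}$ fraction of the currently uncovered elements are covered by $\calC^{*}$ (so the random draw succeeds with at least that probability) or the current partial solution is already $\beta$-approximate, followed by the standard $(1-q)^{t}\le e^{-qt}$ amplification. The only difference is presentational: the paper packages the dichotomy as an event $\Ev$ and bounds $p_{s}\ge \probability(\overline{\Ev})+\probability(\Ev)p_{\opt|\Ev}$, whereas you compute the per-step success probability $\frac{u-\OPT}{u}$ directly along the path; the content is the same.
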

\begin{proof}
  Let $I = (N,\calS,K)$ be our input instance of the MinNonCovered
  problem and fix some $\beta$, $\beta > 1$, and $\epsilon$, $0 <
  \epsilon < 1$. Each element from $N$ appears in at most $p$ sets
  from $\calS$.

  By $p_{s}$ we denote the probability that a single invocation of the
  function \texttt{RecursiveSearch} (from the \texttt{Main} function)
  returns a $\beta$-approximate solution.  We will first show that
  $p_s$ is at least $\left(\frac{\beta - 1}{\beta}\right)^K$, and then
  we will invoke the standard argument that if we make
  $\left\lceil\frac{-\ln\epsilon}{p_s}\right\rceil$ calls to
  \texttt{RecursiveSearch}, then taking the best output gives a
  $\beta$-approximate solution with probability $(1-\epsilon)$.


  Let $\calC^*$ be some optimal solution for $I$, let $N^* \subseteq
  N$ be the set of elements covered by $\calC^*$, and let $U^* = N
  \setminus N^*$ be the set of the remaining, uncovered elements.
%
%
  Consider a single call to \texttt{RecursiveSearch} from the ``for''
  loop within the function \texttt{Main}.  Let $\Ev$ denote the event
  that during such a call, at the beginning of each recursive call, at
  least a $\frac{\beta-1}{\beta}$ fraction of the elements not covered
  by the constructed solution (i.e., the solution denoted
  $\partialsol$ in the algorithm) belongs to $N^*$.
  Note that if the complementary event, denoted $\overline{\Ev}$,
  occurs, then \texttt{RecursiveSearch} definitely returns a
  $\beta$-approximate solution. Why is this the case? Consider some
  tree of recursive invocations of \texttt{RecursiveSearch}, and some
  invocation of \texttt{RecursiveSearch} within this tree. Let $X$ be
  the number of elements not covered by $\partialsol$ at the beginning
  of this invocation.  If at most $\frac{\beta-1}{\beta}X$ of the
  not-covered elements belong to $N^*$, then---of course---the
  remaining at least $\frac{1}{\beta}X$ of them belong to $U^*$. In
  other words, then we have $\frac{1}{\beta}X \leq \|U^*\|$ and,
  equivalently, $X \leq \beta\|U^*\|$. This means that $\partialsol$
  already is a $\beta$-approximate solution, and so the solution
  returned by the current invocation of \texttt{RecursiveSearch} will
  be $\beta$-approximate as well.  (Naturally, the same applies to the
  solution returned at the root of the recursion tree.)


  Now, consider the following random process $\calP$. (Intuitively,
  $\calP$ models a particular branch of the \texttt{RecursiveSearch}
  recursion tree.)  We start from the set $N'$ of all the elements,
  $N' = N$, and in each of the next $K$ steps we execute the following
  procedure: We randomly select an element $e$ from $N'$ and if $e$
  belongs to $N^*$, we remove from $N'$ all the elements covered by
  the first\footnote{We assume that the sets in $\calC^*$ are ordered
    in some arbitrary way.} set from $\calC^*$ that covers $e$. Let
  $p_\opt$ be the probability that a call to \texttt{RecursiveSearch}
  (within \texttt{Main}) finds an optimal solution for $I$, and let
  $p_{\opt|\Ev}$ be the same probability, but under the condition that
  $\Ev$ takes place.  It is easy to see that $p_\opt$ is greater or
  equal than the probability that in each step $\calP$ picks an
  element from $N^*$.  Let $p_\hit$ be the probability in each step
  $\calP$ picks an element from $N^*$, under the condition that at the
  beginning of every step more than $\frac{(\beta - 1)}{\beta}$
  fraction of the elements in $N'$ belong to $N^*$. Again, it is easy
  to see that $p_{\opt|\Ev} \geq p_\hit$. Further, it is immediate
  to see that $p_\hit \geq \left(\frac{\beta - 1}{\beta}\right)^K$.

  Altogether, combining all the above fidnings, we know that the
  probability that \texttt{Re\-cur\-sive\-Search} returns a $\beta$-approximate
  solution is at most:
  \begin{align*}
    p_{s} \geq \probability(\overline{\Ev}) + \probability(\Ev)p_{\opt|\Ev}
    \geq p_{\opt|\Ev} \geq \left(\frac{\beta - 1}{\beta}\right)^K
    \textrm{.}
  \end{align*}
  (That is, either the event $\Ev$ does not take place and
  \texttt{RecursiveSearch} definitely returns a $\beta$-approximate
  solution, or $\Ev$ does occur, and then we lower-bound the
  probability of finding a $\beta$-approximate solution by the
  probability of finding the optimal one.)

  To conclude, the probability of finding a $\beta$-approximate
  solution in one of the $x = \left\lceil -\ln
    \epsilon/\left(\frac{\beta - 1}{\beta}\right)^K \right\rceil$
  independent invocations of \texttt{RecursiveSearch} from
  \texttt{Main} is at least:
  \begin{align*}
    1 - \left(1 - \left(\frac{\beta - 1}{\beta}\right)^K\right)^x \geq
    1 - e^{\ln\epsilon} = 1 - \epsilon \textrm{.}
  \end{align*}
  Establishing the running time of the algorithm is immediate, and so
  the proof is complete.~
\end{proof}

\begin{algorithm}[t]
   \small
   \SetAlCapFnt{\small}
   \SetKwFunction{Expon}{Expon}
   \KwParameters{\\$\hspace{3pt}$ $(N,\calS,K)$ --- input MaxCover instance\\
          $\hspace{3pt}$ $X$ --- the parameter of the algorithm\\
          $\hspace{3pt}$ $\calA(\cdot)$ --- an exact algorithm for MaxCover (returns the set 
                          of sets to be used in the cover)}
   $C = \{\}$\;
   \For{$i\leftarrow 1$ \KwTo $X$}{
      $\Cov \leftarrow \{e \in N: \exists_{S \in C} e \in S\}$ \;
      $S_{\best} \leftarrow \argmax_{S \in \{S_1, \dots, S_m\} \setminus C} \|\{e \in N \setminus \Cov: e \in S\}\|$\;
      $C \leftarrow C \cup \{S_{\best}\}$
   }
   $\uCov \leftarrow N \setminus \{e \in N: \exists_{S \in C} e \in S\}$ \;
   $C' \leftarrow \calA(\uCov, (K-X), S \setminus C)$ \;
   \Return{$C \cup C'$}
   \caption{\small An approximation algorithm for the unrestricted MaxCover problem.}
   \label{alg:greedyAndExpo}
\end{algorithm}

Algorithm~\ref{alg:pApprovalDissat} is very useful, especially in
conjunction with Algorithm~\ref{alg:pApproval}. The former one has to
provide a very good solution if it is possible to cover almost all the
elements and the latter one has to provide a very good solution if in
every solution many elements must be left uncovered.

\section{Algorithms for the Unrestricted Variant}
\label{sec:unrestricted}

So far we have focused on the MaxCover problem where element
frequencies were either upper- or lower-bounded. Now we consider the
completely unrestriced variant of the problem. In this case we give
exponential-time approximation schemes that, nonetheless, are not FPT.

The main idea, which is similar to that of
Cygan~et.~al~\cite{journals/ipl/CyganKW09} and of Croce and
Paschos~\cite{cro-pas:j:cover}, is to solve part of the problem using
an exact algorithm and to solve the remaining part using the greedy
algorithm (i.e., Algorithm~\ref{alg:greedy}). There are two possible
ways in which this idea can be implemented: Either we can first run
the exact algorithm and then solve the remaining part of the instance
using the greedy algorithm, or the other way round. We consider both
approaches, though a variant of the ``brute-force-first-then-greedy''
approach appears to be superior (at least as long as we do not have
exact algorithms that are significantly faster than a brute-force
approach).

We start with the analysis of Algorithm~\ref{alg:greedyAndExpo}, which
first runs the greedy part and then completes it using an exact
algorithm.

\begin{theorem}\label{thm:greedyAndExpo}
  Let $\calA$ be an exact algorithm for the MaxCover problem with time
  complexity $f(K,n, m)$.  For each instance $I = (N,\calS,K)$ of
  MaxCover and for each $X$, $0 \leq X \leq K$,
  Algorithm~\ref{alg:greedyAndExpo} returns an $\left(1 -
    \frac{X}{K}e^{-\frac{X}{K}}\right)$-approximate solution for $I$
  and runs in time $f(K-X,n, m) + \poly(n,K,m))$.
\end{theorem}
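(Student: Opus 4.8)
The plan is to settle the running time by inspection and to spend the real effort on the approximation ratio, which I would bound by splitting the coverage of the returned solution $C \cup C'$ into the part earned by the greedy phase and the part earned by the exact phase. The running time is immediate: the greedy loop runs $X \le K$ iterations, each scanning $\calS$ and picking a most-useful set in $\poly(n,K,m)$ time, after which $\calA$ is called once on an instance with at most $n$ elements, $m$ sets, and budget $K-X$, at cost $f(K-X,n,m)$; adding these gives the stated bound. For the ratio I would fix an optimal solution $\calC^{*}$, let $N^{*}$ be the set it covers (so $\|N^{*}\| = \OPT$), write $a_i$ for the number of elements covered after $i$ greedy iterations with $a = a_X$, and let $U$ be the set of elements left uncovered by the greedy set $C$.

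The first key step is the standard greedy estimate. In iteration $i$ the at least $\OPT - a_{i-1}$ elements of $N^{*}$ still uncovered are covered by the $K$ sets of $\calC^{*}$, so by the pigeonhole principle one of these sets covers a $\tfrac{1}{K}$ fraction of them; since greedy picks a best set, $a_i \ge a_{i-1} + \tfrac{\OPT - a_{i-1}}{K}$, i.e. $\OPT - a_i \le (\OPT - a_{i-1})(1 - \tfrac{1}{K})$. Iterating yields $a \ge \bigl(1 - (1 - \tfrac{1}{K})^{X}\bigr)\OPT$.

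The second step, which I expect to be the main obstacle, is to show that the exact phase, despite having only budget $K-X$, covers at least a $(1 - \tfrac{X}{K})$ fraction of $N^{*}\cap U$, the set of elements that $\calC^{*}$ covers but greedy misses (and $\|N^{*}\cap U\| \ge \OPT - a$). The difficulty is precisely that I cannot reuse all $K$ sets of $\calC^{*}$. To handle this I would introduce a coverage function $c$ assigning each element of $N^{*}\cap U$ to one set of $\calC^{*}$ covering it; because any set already in $C$ covers nothing in $U$, each assigned set actually lies in $\calC^{*}\setminus C \subseteq \calS\setminus C$ and is therefore available to $\calA$. This partitions $N^{*}\cap U$ into at most $K$ disjoint groups; discarding the $X$ smallest and keeping the remaining $K-X$ available sets covers at least $(1 - \tfrac{X}{K})\|N^{*}\cap U\|$ elements, using that the $X$ smallest of $K$ nonnegative numbers sum to at most a $\tfrac{X}{K}$ fraction of the total. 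As $\calA$ is exact, it covers at least this many elements of $U$, all disjoint from the $a$ elements covered by greedy.

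Combining the two phases then gives

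\begin{align*}
  \|C \cup C'\| &\ge a + \left(1 - \tfrac{X}{K}\right)(\OPT - a) \\
  &= \left(1 - \tfrac{X}{K}\right)\OPT + \tfrac{X}{K}\,a .
\end{align*}

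Substituting the greedy bound $a \ge \bigl(1 - (1-\tfrac{1}{K})^{X}\bigr)\OPT$, the coefficient of $\OPT$ collapses to $1 - \tfrac{X}{K}(1 - \tfrac{1}{K})^{X}$, and the elementary estimate $(1 - \tfrac{1}{K})^{X} \le e^{-X/K}$ converts this into the claimed ratio $1 - \tfrac{X}{K}e^{-X/K}$. The greedy estimate and the closing algebra are routine; the delicate point is the grouping/averaging argument for the exact phase, where the reduced budget $K-X$ must be reconciled with the overlap structure of $\calC^{*}$.
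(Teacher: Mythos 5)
Your proof is correct and follows essentially the same route as the paper's: the standard greedy guarantee for the first $X$ iterations, combined with an averaging argument over the sets of an optimal solution showing that some $K-X$ of them (all still available to $\calA$) cover at least a $\left(1-\frac{X}{K}\right)$ fraction of the optimal elements that greedy leaves uncovered. The only difference is presentational---you iterate the greedy recursion $\OPT - a_i \le (\OPT - a_{i-1})(1-\frac{1}{K})$ directly, whereas the paper reaches the same bound on $\sum_i c_i$ via a small linear-programming minimization before performing the identical closing algebra.
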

\begin{proof}
  Establishing the running time of the algorithm is immediate and,
  thus, below we focus on showing the approximation ratio.

  Let $I = (N,\calS,K)$ be an instance of MaxCover and let $X$ be an
  integer, $1 \leq X \leq K$.  We rename the elements in $\calS$ so
  that $\calS = \{S_1, \ldots, S_m\}$ and $S_1, \dots S_X$ are the
  consecutive elements selected in the first, greedy, ``for loop'' in
  Algorithm~\ref{alg:greedyAndExpo}.  For each $i$, $1 \leq i \leq m$,
  let $c_i = \|S_i \setminus (S_1 \cup \cdots \cup S_{i-1})\|$.  Let
  $N_{\OPT}$ denote the set of elements covered by some optimal
  solution and set $\OPT = \|N_{\OPT}\|$. Let $\Cov_i$ denote the set
  $S_1 \cup \cdots \cup S_{i-1}$. (That is, $\Cov_i$ is the set of
  elements in the variable $\Cov$ in the
  Algorithm~\ref{alg:greedyAndExpo} right before executing the $i$'th
  iteration of the ``for loop''. Of course, $\Cov_1 = \emptyset$.)
  Naturally, for each $i$, $1 \leq i \leq m$, we have $\|\Cov_i\| =
  \sum_{j=1}^{i-1}c_i$.

  We claim that for each $i$, $1 \leq i \leq X$, there exist $(K-i)$
  sets from $\calS \setminus \{S_1, \dots S_{i-1}\}$ that cover at
  least $\frac{K-i}{K}$ fraction of the elements from $N_{\OPT}
  \setminus \Cov_{i-1}$. Why is this the case? First, note that there
  are some $K$ sets from $\calS \setminus \{S_1, \dots S_{i-1}\}$ that
  cover $N_{\OPT} \setminus \Cov_{i-1}$ (it suffices to take the $K$
  sets from some optimal solution, if need be, replace those that
  belong to $\{S_1, \ldots, S_{i-1}\}$ with some arbitrarily chosen
  ones from $\calS \setminus \{S_1, \ldots, S_{i-1}\}$). Let $Q_1,
  \ldots, Q_K$ be these $K$ sets.  Consider some arbitrary assignment
  of the elements from $N_{\OPT} \setminus \Cov_{i-1}$ to the sets
  $Q_1, \ldots, Q_K$, such that each element is assigned to exactly
  one set. Further, consider an ordering of these sets according to
  the increasing number of assigned elements. If the $i$'th set in the
  ordering is assigned at most fraction $\frac{1}{K}$ of the elements,
  than each of the sets preceding the $i$'th one in the ordering also
  is assigned at most fraction $\frac{1}{K}$ of the elements. In
  consequence, the last $(K-i)$ sets from the ordering cover at least
  fraction $\frac{K-i}{K}$ of the elements. On the other hand, if the
  $i$'th set in the order is assigned more than fraction $\frac{1}{K}$
  of the elements then the following sets also are and, once again,
  the last $(K-i)$ elements cover at least fraction $\frac{K-i}{K}$ of
  the elements.

  In consequence, we see that for each $i$, $1 \leq i \leq X$, $c_i
  \geq \frac{1}{K}(\OPT - \sum_{j = 1}^{i-1} c_j)$. The reason is that
  since there are $K-i$ sets among $\calS \setminus \{S_1, \ldots,
  S_{i-1}\}$ that cover fraction $\frac{K-i}{K}$ of elements from
  $N_\OPT \setminus \Cov_i$, at least one of them must cover
  $\frac{1}{K}(\OPT - \|\Cov_i\|)$.  $S_i$ is chosen as a set that
  covers most sets from $N - \Cov_i$. It covers $c_i$ elements from $N
  - \Cov_i$, and, thus, $c_i \geq \frac{1}{K}(\OPT - \|\Cov_i\|) =
  \frac{1}{K}(\OPT - \sum_{j=1}^{i-1})$.

  We can now proceed with computing the algorithm's approximation
  ratio.  By the above reasoning, we observe that the solution
  provided by Algorithm~\ref{alg:greedyAndExpo} covers at least $c =
  \sum_{i=1}^{X} c_i + \frac{K-X}{K}(\OPT - \sum_{i = 1}^{X} c_i) =
  \frac{X}{K}\sum_{i=1}^{X} c_i + \frac{K-X}{K}\OPT$. Now, we assess
  the minimal value of $\sum_{i=1}^{X} c_i$. Minimization of
  $\sum_{i=1}^{X} c_i$ can be viewed as a linear programming task with
  the following constraints: for each $i$, $1 \leq i \leq X$, $c_i
  \geq \frac{1}{K}(\OPT - \sum_{j = 1}^{i-1} c_j)$.  Since we have $X$
  variables and $X$ constraints, we know that the minimum is achieved
  when each constraint is satisfied with equality (see, e.g.,
  \cite{Vazirani:2001:AA:500776}). Thus a solution to our linear
  program consists of values $c_{1, \min}, \ldots, c_{X,\min}$ that,
  for each $i$, $1 \leq i \leq X$, satisfy $c_{i, \min} =
  \frac{1}{K}(\OPT - \sum_{j = 1}^{i-1} c_{j, \min})$. By induction,
  we show that for each $i$, $1 \leq i \leq X$, $c_{i, \min} =
  \frac{1}{K}\left(\frac{K-1}{K}\right)^{i-1}\OPT$. Indeed, the claim
  is true for $i = 1$:
  \begin{align*}
    c_{1, \min} = \frac{1}{K}\OPT
  \end{align*}
  Now, assuming that $c_{i, \min} =
  \frac{1}{K}\left(\frac{K-1}{K}\right)^{i-1}\OPT$, we calculate
  $c_{(i+1), \min}$:
  \begin{align*}
    c_{(i+1), \min} &= \frac{1}{K}\left(\OPT - \sum_{j = 1}^{i} c_{j, \min}\right) \\
    &= \frac{1}{K}\OPT\left(1 - \frac{1}{K}\sum_{j = 1}^{i} \left(\frac{K-1}{K}\right)^{j-1}\right) \\
    &= \frac{1}{K}\OPT\left(1 - \frac{1}{K} \cdot \frac{1 - \left(\frac{K-1}{K}\right)^{i}}{1 - \left(\frac{K-1}{K}\right)}\right) \\
    &= \frac{1}{K}\OPT\left(1 - \left(1 - \left(\frac{K-1}{K}\right)^{i}\right)\right) \\
    &= \frac{1}{K}\OPT\left(\frac{K-1}{K}\right)^{i} \textrm{.}
  \end{align*}

  Thus we can lower-bound the number of elements covered by
  Algorithm~\ref{alg:greedyAndExpo} as follows:
  \begin{align*}
    c &= \frac{X}{K}\sum_{i=1}^{X} c_i + \frac{K-X}{K}\OPT \\
    &= \OPT\left(\frac{X}{K^2}\sum_{i=1}^{X}\left(\frac{K-1}{K}\right)^{i-1} + \frac{K-X}{K}\right) \\
    &= \OPT\left(\frac{X}{K^2} \cdot \frac{1 - \left(\frac{K-1}{K}\right)^{X}}{1 - \left(\frac{K-1}{K}\right)} + \frac{K-X}{K}\right) \\
    &= \OPT\left(\frac{X}{K}\left(1 - \left(\frac{K-1}{K}\right)^{X}\right) + \frac{K-X}{K}\right) \\
    &\geq \OPT\left(1 - \frac{X}{K}e^{-\frac{X}{K}}\right) \textrm{.}
  \end{align*}
  This completes the proof.~
\end{proof}

The idea of the proof of Theorem~\ref{thm:greedyAndExpo} is simillar
to the algorithm of Cygan~et.~al~\cite{journals/ipl/CyganKW09} for the
problem of weighted set cover. Theorem~\ref{thm:greedyAndExpo} gives a
good-quality result provided we knew an optimal algorithm with the
better complexity than exhaustive search.
Otherwise, we can obtain even better results using
Algorithm~\ref{alg:expoAndGreedy}, which first runs a brute-force
approach and completes it using the greedy algorithm.

\begin{algorithm}[t]
   \small
   \SetAlCapFnt{\small}
   \SetKwFunction{better}{better}
   \KwParameters{\\$\hspace{3pt}$ $(N,\calS,K)$ --- input MaxCover instance\\
          $\hspace{3pt}$ $X$ --- the parameter of the algorithm}
   $C = \{\}$\;
   $C_{best} = \{\}$\;
   \ForEach{$(K-X)$-element subset $C$ of $\calS$}{
      \For{$i\leftarrow (K-X+1)$ \KwTo $K$}{
          $\Cov \leftarrow \{e \in N: \exists_{S \in C} e \in S\}$ \;
          $S_{\best} \leftarrow \argmax_{S \in \{S_1, \dots, S_m\} \setminus C}$ 
               $\{e \in N \setminus \Cov: e \in S\}\|$\;
          $C \leftarrow C \cup \{S_{\best}\}$
      }
      $C_{best} \leftarrow$ better solution among $C_{best}$ and $C$\;
   }
   \Return{$C_{best}$}
   \caption{\small The approximation algorithm for the MaxCover problem.}
   \label{alg:expoAndGreedy}
\end{algorithm}

\begin{theorem}
  For each instance $I = (N,\calS,K)$ of MaxCover and each integer
  $X$, $0 \leq X \leq K$, Algorithm~\ref{alg:expoAndGreedy} computes
  an $\left(1 - \frac{X}{K}e^{-1}\right)$-approximate solution for
  $I$ in time ${m \choose K-X} + \poly(K,n,m)$.
\end{theorem}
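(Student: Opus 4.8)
The plan is to show that the brute-force loop will, for at least one of its $(K-X)$-element subsets, guess the ``heavy'' part of an optimal solution, after which the $X$ greedy steps recover a $(1-e^{-1})$ fraction of what remains to be covered. Establishing the running time is immediate, just as in Theorem~\ref{thm:greedyAndExpo}: there are ${m \choose K-X}$ subsets to enumerate, and for each the inner loop is a polynomial-time greedy completion, so below I concentrate on the approximation guarantee. Fix an optimal solution $\calC^{*} = \{Q_1,\dots,Q_K\}$ covering a set $N_{\OPT}$ with $\OPT = \|N_{\OPT}\|$.

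First I would isolate the ``heavy part'' of $\calC^{*}$ by an averaging argument. Assign every element of $N_{\OPT}$ to exactly one set of $\calC^{*}$ that covers it, obtaining nonnegative counts $a_1,\dots,a_K$ with $\sum_{j} a_j = \OPT$, and order the sets so that $a_1 \le \cdots \le a_K$. Since the $X$ smallest counts have average at most the overall average $\OPT/K$, they sum to at most $\frac{X}{K}\OPT$, and hence the $K-X$ largest sets together cover at least $\frac{K-X}{K}\OPT$ elements. Call this subcollection $\calC_0$; it is a $(K-X)$-element subset of $\calS$, so the brute-force loop of Algorithm~\ref{alg:expoAndGreedy} considers it at some iteration. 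Write $c_0$ for the number of elements covered by $\calC_0$, so that $c_0 \ge \frac{K-X}{K}\OPT$.

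Next I would analyze the greedy completion on the iteration where $C = \calC_0$. Let $r = \OPT - c_0$ be the number of elements of $N_{\OPT}$ not yet covered by $\calC_0$; these are covered by the $X$ remaining optimal sets $\calC^{*}\setminus\calC_0$, so in the residual instance the best $X$-set coverage is at least $r$. The standard greedy MaxCover guarantee then ensures that the $X$ greedy steps cover at least $(1-e^{-1})r$ \emph{new} elements. Consequently the solution found on this iteration, and hence the one returned by the algorithm, covers at least
\begin{align*}
  c_0 + (1 - e^{-1})(\OPT - c_0) = (1-e^{-1})\OPT + e^{-1}c_0 \ge \Big(1 - \tfrac{X}{K}e^{-1}\Big)\OPT,
\end{align*}
where the final inequality substitutes $c_0 \ge \frac{K-X}{K}\OPT$ (the coefficient of $c_0$ being positive). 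The main obstacle is getting the two pieces to interlock cleanly: the averaging bound must be stated so that the residual uncovered part of $N_{\OPT}$ is covered by exactly the $X$ optimal sets outside $\calC_0$, and the $(1-e^{-1})$ bound must be applied to the residual instance (where greedy maximizes new coverage over all of $N$, which only helps) rather than to the residual target set in isolation. Once these are aligned, the arithmetic collapses to the claimed ratio.
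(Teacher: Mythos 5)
Your proposal is correct and follows essentially the same route as the paper's proof: identify a $(K-X)$-element subcollection of the optimal solution covering at least a $\frac{K-X}{K}$ fraction of $\OPT$ (the paper asserts this directly for the best such subcollection, while you supply the averaging argument), note that the brute-force loop will try it, and then apply the standard $(1-e^{-1})$ greedy guarantee to the residual instance. Your write-up is in fact slightly more careful than the paper's on the point that the remaining $X$ optimal sets only \emph{lower-bound} the residual optimum, but the argument is the same.
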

\begin{proof}
  Let $I = (N,\calS,K)$ be our input instance and let $\calC^{*}$,
  $\calC^{*} \subseteq \calS$, denote some optimal solution. Let
  $\calC_X^{*}$ denote a subset of $(K-X)$-elements from $\calC^{*}$
  that together cover the greatest number of the elements. Thus the
  sets from $\calC_X^{*}$ cover at least a fraction $\frac{K-X}{K}$ of all
  the elements covered by the optimal solution. Consider the problem
  of covering the elements uncovered by $\calC_X^{*}$ with $X$ sets
  from $(\calS \setminus \calC_X^{*})$. We know that $(\calC^{*} \setminus
  \calC_X^{*}$ is an optimal solution for this problem. On the other
  hand, we also know that the greedy algorithm achieves approximation
  ratio $(1 - \frac{1}{e})$ for the problem. Thus, the approximation
  ratio for the original problem is:
  \begin{align*}
    \left(\frac{K-X}{K} + \frac{X}{K}\left(1 - \frac{1}{e}\right)
    \right) = \left(1 - \frac{X}{K}e^{-1}\right) \textrm{.}
  \end{align*}
  It is immediate to establish the running time of the algorithm and
  so the proof is complete.~
\end{proof}

If we wish to solve MaxVertexCover rather than MaxCover, then in
Algorithm~\ref{alg:expoAndGreedy} we should replace the greedy
approximation algorithm with that of Ageev and
Sviridenko~\cite{age-svi:b:covers}.

\begin{corollary}\label{alg5:maxvertexcover}
There exists an $\left(1 - \frac{X}{4K}\right)$-approximation algorithm for MaxVertexCover problem running in time ${m \choose K-X} + \poly(K,n,m)$
\end{corollary}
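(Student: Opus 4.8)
The plan is to observe that Corollary~\ref{alg5:maxvertexcover} follows from essentially the same argument as the preceding theorem on Algorithm~\ref{alg:expoAndGreedy}, with only two changes: the completion phase must use the $\frac{3}{4}$-approximation of Ageev and Sviridenko~\cite{age-svi:b:covers} in place of the greedy $(1-\frac{1}{e})$-approximation, and the subproblem solved after the brute-force phase must be recognized as a genuine MaxVertexCover instance so that their guarantee actually applies.

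First I would set up the modified algorithm exactly as Algorithm~\ref{alg:expoAndGreedy}, except that the inner loop that greedily completes each partial solution with $X$ further vertices is replaced by a single call to the Ageev--Sviridenko algorithm. For the analysis I would fix an optimal MaxVertexCover solution $\calC^{*}$ (a set of at most $K$ vertices) and let $\calC^{*}_X$ be the $(K-X)$-element subset of $\calC^{*}$ covering the most edges; as in the preceding theorem, a simple averaging argument shows that $\calC^{*}_X$ covers at least a $\frac{K-X}{K}$ fraction of the edges covered by $\calC^{*}$. I would then focus on the iteration of the outer loop in which the brute-force phase examines exactly $C = \calC^{*}_X$.

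The key step, and the only place that needs an argument beyond the previous proof, is to verify that completing $\calC^{*}_X$ is again a MaxVertexCover problem, so that Ageev--Sviridenko is applicable with its stated ratio. Here I would argue that after fixing the vertices in $\calC^{*}_X$ and deleting all edges already incident to them, the surviving edges together with all the vertices still form a graph $G' = (V,E')$ (each surviving edge retains its two endpoints, so frequency stays exactly two), and that choosing $X$ further vertices to cover as many edges of $E'$ as possible is precisely MaxVertexCover on $G'$. Consequently the $\frac{3}{4}$-approximation applies to this residual instance, and $\calC^{*} \setminus \calC^{*}_X$ witnesses that the residual optimum covers all edges of $\calC^{*}$ missed by $\calC^{*}_X$. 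I expect the main (though modest) obstacle to be exactly this bookkeeping: confirming that removing the already-covered edges keeps the instance inside the MaxVertexCover class, which, unlike in the general MaxCover setting, is what lets us invoke the $\frac{3}{4}$ guarantee rather than only the greedy $(1-\frac{1}{e})$ bound.

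Finally I would combine the estimates exactly as in the preceding proof: the output covers at least a $\frac{K-X}{K}$ fraction of $\OPT$ from $\calC^{*}_X$, plus at least a $\frac{3}{4}$ fraction of the remaining $\frac{X}{K}$ portion via the Ageev--Sviridenko completion, giving coverage at least $\left(\frac{K-X}{K} + \frac{X}{K}\cdot\frac{3}{4}\right)\OPT = \left(1 - \frac{X}{4K}\right)\OPT$. Since the algorithm returns the best solution over all $(K-X)$-element subsets, this bound holds for its actual output. For the running time, the brute-force phase contributes ${m \choose K-X}$ and the Ageev--Sviridenko completion runs in polynomial time, so the total is ${m \choose K-X} + \poly(K,n,m)$, as claimed.
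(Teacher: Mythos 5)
Your proposal is correct and follows exactly the route the paper intends: rerun the analysis of Algorithm~\ref{alg:expoAndGreedy} with the Ageev--Sviridenko $\frac{3}{4}$-approximation in place of the greedy completion, noting that the residual instance (edges not incident to the brute-forced vertices) is again a MaxVertexCover instance, which yields $\frac{K-X}{K}+\frac{X}{K}\cdot\frac{3}{4}=1-\frac{X}{4K}$. The paper leaves this substitution implicit, and your explicit check that the residual problem stays within the MaxVertexCover class is precisely the bookkeeping it glosses over.
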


It is quite evident that as long as algorithm $\calA$ used within
Algorithm~\ref{alg:greedyAndExpo} is the simple brute-force algorithm
that tries all possible solutions, then
Algorithm~\ref{alg:expoAndGreedy} is superior; in the same time it
achieves a better approximation ratio. It turns out that, for the case
of MaxVertexCover, Algorithm~\ref{alg:expoAndGreedy} (in the variant
from Corollary~\ref{alg5:maxvertexcover}) is also better than the
algorithm of Croce and
Paschos~\cite{cro-pas:j:cover}.\footnote{Algorithm~\ref{alg:greedyAndExpo}
  cannot be directly compared to the algorithm of Croce~and
  ~Paschos~\cite{cro-pas:j:cover} for the following reason.
  Algorithm~\ref{alg:greedyAndExpo} uses specifically a greedy
  algorithm which is the best known approximation algorithm for
  MaxCover, but which is suboptimal for MaxVertexCover. In contrast,
  the algorithm of Croce~and ~Paschos~\cite{cro-pas:j:cover} can use,
  e.g., the $\frac{3}{4}$-approximation algorithm of Ageev and
  Sviridenko~\cite{age-svi:b:covers}. One could, of course, try to use
  the algorithm of Ageev and Sviridenko in
  Algorithm~\ref{alg:greedyAndExpo}, but our analysis does not work
  for this case.}

The idea behind the algorithm of Croce~and
~Paschos~\cite{cro-pas:j:cover} for MaxVertexCover is similar to that
behind our Algortihm~\ref{alg:expoAndGreedy}.  Specifically, given two
algorithms for MaxVertexCover, approximation algorithm $\calA_a$ and
exact algorithm $\calA_e$, for a given value $X$ it first uses
$\calA_{e}$ to find am optimal solution that uses $K-X$ vertices (out
of the $K$ vertices that we are allowed to use in the full solution),
then it remeves these $K-X$ vertices and solves the remaining part of
the problem using $\calA_e$.  Assuming that $\beta_{a}$ is the
approximation ratio of the algorithm $\calA_{a}$, this approach
results in the approximation ratio equal to $\left(\frac{X}{K} +
  \beta_{a}\left(1 - \frac{X}{K}\right)^2\right)$.

Below we compare Algorithm~\ref{alg:expoAndGreedy} (version from
Corollary~\ref{alg5:maxvertexcover}) with the algorithm of Croce and
Paschos~\cite{cro-pas:j:cover}. As the components $\calA_a$ and
$\calA_e$ we use, respectively, the $\frac{3}{4}$-approximation
algorithm of Ageev and Sviridenko~\cite{age-svi:b:covers} and the
brute-force algorithm that tries all possible solutions.  The best
known exact algorithm for MaxVertexCover is due to
Cai~\cite{cai:j:cardinality-constrained} and has the complexity
$O(m^{0.792K})$, but this algorithm uses exponential amount of space.
Since exponential space complexity might be much less practical than
exponential time complexity, we decided to use the brute-force
approach (to the best of our knowledge there, there is no better exact
algorithm running in a polynomial space).
\begin{figure}[tb]
  \begin{center}
    \includegraphics[scale=0.5]{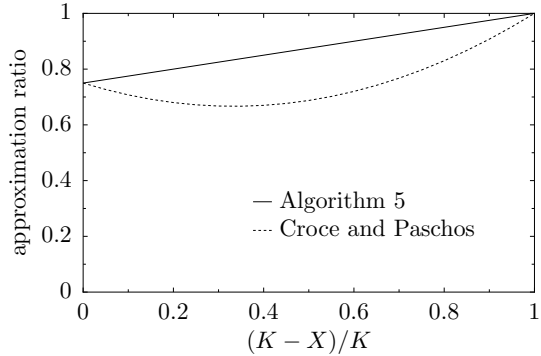}
  \end{center}
  \vspace{-0.75cm}
  \caption{The comparison of the approximation ratios of Algorithm~\ref{alg:expoAndGreedy} and the algorithm of Croce~and~Paschos~\cite{cro-pas:j:cover} for MaxVertexCover.}
  \label{fig:approximationComparison}
\end{figure}
We present our comparison in Figure~\ref{fig:approximationComparison}.
The $x$-axis represents the parameter $\frac{K-X}{K}$, measuring the
fraction of the solution obtained using the exact algorithm (for $0$
we use the approximation algorithm alone and for $1$ we use the exact
algorithm alone). On the $y$-axis we give approximation ratio of each
algorithm.  In other words, for each point on the $x$-axis we set the
$X$ parameters of the algorithms to be equal, so that their running
times are the same, and we compare their approximation guarantees.

We conclude that, as long as we use the brute-force algorithm as the
exact one, Algorithm~\ref{alg:expoAndGreedy} gives considerably better
approximation guarantees than that of Croce and
Paschos. Figure~\ref{fig:approximationComparison} also exposes one
potential weakness of the algorithm of Croce and Paschos. Apparently,
for some cases increasing the complexity of the algorithm results in
the decrease of its approximation guarantee.

It is quite interesting to understand the reasons behind the differing
performance of Algorithm~\ref{alg:expoAndGreedy} and that of Croce and
Paschos. In some sense, the algorithms are very similar. If we use the
brute-force algorithm as the exact one in the algorithm of Croce and
Paschos, then the main difference is that our algorithm runs the
approximation algorithm for each possible solution tried by the
brute-force algorithm, and Croce and Paschos's algorithm only runs the
approximation algorithm once, for the best partial solution. In
effect, our algorithm can exploit situations where it is better when
the exact algorithm does not find an optimal solution for the
subproblem, but rather leaves ground for the approximation algorithm
to do well. Naturally, such strategy is only possible if we have
additional knowledge of the structure of the exact algorithm (here,
the brute-force algorithm). The result of Croce and Paschos pays the
price for being more general and being able to use any combination of
the approximation algorithm and the exact algorithm.

\section{Conclusions}

Motivated by the study of winner-determination under
Chamberlin--Courant's voting rule (with approval misrepresentation),
we have considered the MaxCover problem with bounded frequencies and
its minimization variant, the MinNonCovered problem, from the point of
view of approximability by FPT algorithms. We have shown that for
upper-bounded frequencies there is an FPT approximation scheme for
MaxCover and a randomized FPT approximation scheme for
MinNonCovered. For lower-bounded frequences we have shown that the
standard greedy algorithm for MaxCover may achieve a better
approximation ratio than in the unrestricted case. Finally, we have
shown that in the unrestricted case there are good exponential-time
approximation algorithms (though, not FPT ones) that combine exact and
greedy algorithms and smoothly exchange the quality of the
approximation for the running time.
Some of our results regarding MaxCover with bounded frequencies
improve previously known results for MaxVertexCover. In particular,
our Algorithm~\ref{alg:pApproval} improves upon the approximation
scheme given by Marx, and our Algorithm~\ref{alg:expoAndGreedy}
improves upon the result of Croce and Paschos~\cite{cro-pas:j:cover}
(provided we use brute-force algorithm as the underlying exact
algorithm in the scheme proposed by Croce and Paschos; this is
reasonable if we are interested in algorithms that use only polynomial
amount of space).

There are several interesting directions for future research. For
example, is it possible to obtain FPT approximation schemes for
MaxCover with lower-bounded element frequencies? Further, what is the
exact complexity of MaxCover (with or without lower-bounded
frequencies)? We have quickly observed its $\wtwo$-hardness, but does
it belong to $\wtwo$? (It is quite easy, however, to show that it
belongs to $\wpclass$.) We are also interested in the exact complexity
of MaxCover with lower-bounded frequencies for the case where we
require the ratio of frequency lower-bound and the number of sets to
be at least some given value $\alpha$, $0 < \alpha < 1$? We have given
a PTAS for this variant of the problem (see Theorem~\ref{thm:ptas})
and have shown its membership in $\beta^2$, but we did not attempt to
prove its completeness for any particular complexity class.





\bibliographystyle{abbrv}
\bibliography{main}

\end{document}